\definecolor{pinegreen}{rgb}{0.0,0.47,0.44}
\newtheorem{theorem}{Theorem}
\newtheorem{corollary}[theorem]{Corollary}
\newtheorem{example}{Example}
\newcommand{\BibTeX}{B\kern-.05em{\sc i\kern-.025em b}\kern-.08em\TeX}
\DeclareMathOperator*{\argmax}{arg\,max}
\begin{document}


\begin{frontmatter}

\title{A Mechanism for Mutual Fairness in Cooperative Games  with Replicable Resources -- Extended Version}

\author[A]{\fnms{Björn}~\snm{Filter}\orcid{0009-0008-8666-6239}\thanks{Corresponding Author. Email: bjoern.filter@uni-hamburg.de\\\textbf{Note:} This paper is the extended version of a paper accepted at ECAI 2025, providing the proof of the main theorem in the appendix.} }
\author[A]{\fnms{Ralf}~\snm{Möller}\orcid{0000-0002-1174-3323}}
\author[A]{\fnms{Özgür Lütfü}~\snm{Özçep}\orcid{0000-0001-7140-2574}} 

\address[A]{Institute for Humanities-Centered AI (CHAI), University of Hamburg, Germany}

\begin{abstract}
The latest developments in AI focus on agentic systems where artificial and human agents cooperate to realize global goals. An example is collaborative learning, which aims to train a global model based on data from individual agents. A major challenge in designing such systems is to guarantee safety and alignment with human values, particularly a fair distribution of rewards upon achieving the global goal. Cooperative game theory offers useful abstractions of cooperating agents via value functions, which assign value to each coalition, and via reward functions. With these, the idea of fair allocation can be formalized by specifying fairness axioms and designing concrete mechanisms. Classical cooperative game theory, exemplified by the Shapley value, does not fully capture scenarios like collaborative learning, as it assumes nonreplicable resources, whereas data and models can be replicated. Infinite replicability requires a generalized notion of fairness, formalized through new axioms and mechanisms. These must address imbalances in reciprocal benefits among participants, which can lead to strategic exploitation and unfair allocations. The main contribution of this paper is a mechanism and a proof that it fulfills the property of mutual fairness, formalized by the Balanced Reciprocity Axiom. It ensures that, for every pair of players, each benefits equally from the participation of the other.
\end{abstract}

\end{frontmatter}


\section{Introduction}

The latest developments in artificial intelligence have focused on agentic systems in which artificial and/or human agents interact to achieve global goals. The capabilities of current artificial agents based on large language models have broadened the spectrum of possible interactions between artificial agents or artificial and human agents and have brought about challenging problems in the wider areas of AI safety \cite{hendrycks25AiSafety} and more concretely AI alignment \cite{russell22artificial} and cooperative AI \cite{conitzer23foundationsOFCooperativeAI,conitzer24socialChoice}. 
Even in seemingly unproblematic multi-agent scenarios, such as that of collaborative learning \cite{karimireddy2022mechanisms,qiao23collaborative}, alignment problems must be tackled. One such problem is the topic of this paper: the fair allocation of rewards among agents that provide data for training a global model or database. 

Many solution concepts for the analysis and synthesis of safe and human-value-aligned agentic systems rely on social mechanism design as developed in (computational) choice theory \cite{brandt16handbook} and cooperative game theory \cite{chalkiadakis2011computational}. In particular, cooperative game theory offers useful abstractions that can help solve the problem of fair reward allocation. The first abstraction is that of a value function $v$ that assigns values to coalitions of agents. The value represents the benefit of the coalition relative to values assigned to individual agents. The second abstraction is the reward function, which is meant to describe the distribution of rewards among members of a coalition. A well-known example of a reward function is the Shapley value \cite{shapley53value}. 
With these abstractions, various problems in cooperative game theory, particularly fair allocation, are tackled using the well-proven axiomatic methodology. Axioms formally describe both the constraints on cooperation scenarios and the desired properties of reward functions. This methodology allows for formal treatment of whether a desired reward mechanism exists and whether it is unique. 
 
In classical cooperative game theory, the axiomatic methodology has been applied successfully to fair allocation, resulting in the Shapley reward mechanism. However, the Shapley Value does not fully capture scenarios like collaborative learning, as it assumes non-replicable resources, whereas data and models can be replicated. 
Recent work \cite{sim2020collaborative, wang2020principled, wang2023data} accounts for scenarios with infinitely replicable resources and develops a generalized notion of fairness, formalized through new axioms and mechanisms. However, all of these ignore the relevant aspect of mutual fairness: imbalances in reciprocal benefits among participants can lead to strategic exploitation and unfair allocations. This paper aims to fill this void. 

The paper investigates the implications of the axiom of balanced reciprocity, a variant of which has been considered (for non-replicable resources) in the literature \cite{zou2020sharing}.  This axiom, motivated by the economic concept of fair exchange, ensures that no participant profits from another more than that participant profits in return. The main result of this paper is a concrete reward mechanism induced by an algorithm, along with a proof that it fulfills mutual fairness (formalized by the Balanced Reciprocity Axiom) and other well-known incentivization and fairness axioms. The mechanism ensures that, for every pair of players, each benefits equally from the participation of the other. Moreover, we can show that the mechanism is uniquely determined by the axioms. 

The result applies to a wide variety of cooperative games because its only presumptions on the valuation function $v$ are that the empty coalition is assigned a value of $0$ and that it is monotonic, i.e., larger coalitions yield values at least equal to their sub-coalitions. In particular, the result does not presume stronger properties such as additivity or super-additivity. This is a property broadly presumed in classical cooperative game theory, stating that a coalition's value is not less than the sum of values of its partition's sub-coalitions. In particular, sub-additive games fit collaborative learning scenarios based on infinitely replicable data where the reward function is argued to be concave \cite{karimireddy2022mechanisms}: at some point, replication does not add further value but rather saturates. 

Due to its generality, proving the mutual fairness of the mechanism is not trivial. We sketch the basic proof ideas in the main text and provide the complete proof in the appendix.


\section{Problem Formulation}\label{prob}
In many cooperative settings, multiple participants each contribute individual data points to build a shared dataset. Participants may then be granted access (potentially with restrictions) to the dataset itself or to a machine learning model trained on it. This collaborative arrangement allows all contributors to benefit from the pooled resource. Each agent (or player) possesses data that they can either use independently or contribute to a coalition. The value that a player's data adds to a coalition—and thus the coalition's overall value increase from that contribution—can depend on various factors, including redundancy, complementarity, diminishing returns, and dependencies on prior knowledge. As a result, the marginal value of a player's data points may vary depending on the existing composition of the dataset.

This setting reflects a wide range of real-world applications, such as scientific research, market forecasting, decentralized sensing, and AI development. Here, the usefulness of new data often depends on the information already available. In these contexts, collaboration can enhance collective benefit, but diminishing returns are common: larger coalitions tend to gain less from the addition of new participants than smaller ones, as much of the valuable data may already be present. Consequently, the value of a coalition is not simply additive but shaped by complex interactions among the individual contributions of its members.

An example of a collaborative learning setting is collaborative training of machine learning models, such as those for handwriting recognition, text classification, or metadata extraction, using digitized historical documents held by multiple institutions. Libraries, archives, and research centers often possess valuable manuscript collections and are willing to share their data in exchange for access to a larger model or database than they could build on their own. Collaborative learning enables these institutions to jointly develop a shared model, leveraging the collective value of their datasets. In such a setting, the contribution of each participant to the quality of the final model depends not only on the amount of data available locally but also on its uniqueness, diversity, and relevance. For example, training data that covers underrepresented scripts, languages, or time periods can improve the model more significantly than redundant or homogeneous data. As a result, the overall utility of a coalition in collaborative learning reflects complex interactions among the contributions of the participants. 

Numerous similar settings exist where collaborative learning can facilitate collaboration among data holders. In healthcare and medical research, for instance, hospitals or clinics may jointly train diagnostic models (e.g., for tumor detection or patient risk stratification). In genetic research, collaborative learning allows hospitals, research labs, or pharmaceutical companies to train models for disease prediction, drug response, or population health studies using genomic and clinical data distributed across jurisdictions. In financial market analysis, institutions such as banks, hedge funds, or regulatory bodies might collaborate to improve risk models or anomaly detection systems. Similarly, in the legal domain, courts or legal tech companies may contribute case metadata or annotated legal texts to train natural language processing systems for tasks such as case classification, precedent retrieval, or legal argument mining.

These examples share a common structure: valuable but decentralized datasets and a mutual interest in a stronger shared model. In all such cases, the value of each participant’s contribution depends not just on the quantity but also on the extent to which their data complements those of others. Collaborative learning in these settings forms a natural cooperative game, where coalitional value arises from both the volume and the informational diversity of the data brought together. Furthermore, all these settings have in common that further information does not decrease the value of a dataset, i.e., the value function is monotonic. 

We will now describe how to model our setting formally as a cooperative game. Let $N \subset \mathbb{N}$ be a finite set of players, and we define $|N| = n$. Let a \emph{game} $v$ be a real-valued function $v:2^N \rightarrow\mathbb{R}$ that satisfies $v(\emptyset) = 0$. We call a set $C \subseteq N$ a \emph{coalition} and $v(C)$ the \emph{value} of that coalition. To simplify the notation, we use $v_C$ for $v(C)$, and, in the case of singular member coalitions, we write $v_i$ for $v(\{i\})$. We presume a \emph{monotonic game}, that is, for all $C \subseteq N$ and all $C' \subseteq C$, $v(C') \leq v(C)$. 
More constrained classes that we mention in the text are super-additive games, i.e.,  $v(C_1) + v(C_2) \leq v(C_1 \cup C_2)$ for  $C_1 \cap C_2 = \emptyset$,  sub-additive games, i.e., $v(C_1) + v(C_2) \geq v(C_1 \cup C_2)$ for  $C_1 \cap C_2 = \emptyset$, and  additive games, i.e., $v(C_1) + v(C_2) = v(C_1 \cup C_2)$ for $C_1 \cap C_2 = \emptyset$. Let $\Gamma^N$ denote the set of all games $v$ with the set of players $N$.

In our setting, a \emph{solution function} or, alternatively,  a \emph{reward function/mechanism}  prescribes an element of $\mathbb{R}^{n \times 2^n}$ for each game $v \in \Gamma^N$. This function maps coalitions to the rewards agents receive. Each column corresponds to a possible coalition $C \subseteq N$ while each row corresponds to a player $i \in N$. For such a solution function $\mathcal{M}(v) :=$
\begin{align}\label{defsolfun}
    \begin{bmatrix}
    \mathcal{M}(v)^{\emptyset}_1 & \mathcal{M}(v)^{\{1\}}_{1} & \mathcal{M}(v)^{\{2\}}_{1} & \dots & \mathcal{M}(v)^C_1 & \dots & \mathcal{M}(v)^N_{1} \\
    \vdots & \vdots & \vdots & \ddots & \vdots & \ddots & \vdots \\
    \mathcal{M}(v)^{\emptyset}_n & \mathcal{M}(v)^{\{1\}}_n & \mathcal{M}(v)^{\{2\}}_n & \dots & \mathcal{M}(v)^C_n & \dots & \mathcal{M}(v)^N_n
    \end{bmatrix}
\end{align}
and any $C \subseteq N$, $i \in N$, let $\mathcal{M}(v)^C_i$ denote the reward obtained by player $i$ if coalition $C$ forms. We will write $\mathcal{M}^C_i$ for $\mathcal{M}(v)^C_i$ when the function $v$ is clear from the context. We will call $\mathcal{M}^C_i$ the reward of the player $i$ if coalition $C$ forms.

In this approach, the solution function assigns a reward to every player, including those who are not part of a coalition. We assume that these players operate independently and obtain the value of their individual datasets. Alternatively, one might treat the solution function as a mechanism that interacts solely with the coalition, disregarding players outside the coalition and assigning them a value of zero. However, we decided to adopt the former approach because it accounts for the outcome of every player, not just those within a coalition. Furthermore, we note that the first $n+1$ columns in $\mathcal{M}(v)$ are redundant. These correspond to cases where either no coalition forms or a coalition consists of a single player, both of which yield the same outcome, where each player is working independently. We decided to stick to our definition because it makes the whole approach more homogeneous and eases the notation of the axioms.


\section{Axioms for Incentivization and Fairness}\label{axioms}
An effective reward allocation mechanism in the described setting has to both incentivize players to participate and reward them fairly. To ensure this, we establish a set of axioms that govern valid reward distributions. We begin by defining incentive constraints that ensure feasibility, efficiency, and individual rationality. These constraints guarantee that the solution function remains viable and encourages participation. Building on this foundation, we introduce fairness constraints to prevent unjust disparities in reward distribution. The following sections formalize these principles and their implications.

\subsection{Incentive Axioms}
To formulate a valid solution function, we state the following incentive constraints, ensuring that a solution function is feasible, efficient, and individually rational. These are common solution concepts from cooperative game theory as described in relevant textbooks \cite[Chapter 12]{shoham2008multiagent}, \cite{chalkiadakis2011computational}.  
For any $v \in \Gamma^N$ fulfilling the conditions stated in Section \ref{prob}, a valid solution function $\mathcal{M}(v)$ must fulfill the following axioms:\\
\\
\begin{tabularx}{\linewidth}{l X}
        R1 & \textbf{Non-negativity:} Each player must get a non-negative reward: $\forall C \subseteq N$, $\forall i \in C: \mathcal{M}^C_i \geq 0$.
\end{tabularx}
\begin{tabularx}{\linewidth}{l X}
        R2 &\textbf{Feasibility:} The reward for each player in any coalition $C \subseteq N$ cannot be larger than the value achieved by that coalition: $\forall C \subseteq N$, $\forall i \in C: \mathcal{M}^C_i \leq v_C$.
\end{tabularx}
\begin{tabularx}{\linewidth}{l X}
        R3 & \textbf{Weak Efficiency:} In each coalition $C \subseteq N$, the reward received by at least one player $i \in C$ must be as large as the total value that coalition $C$ can achieve: $\forall C \subseteq N\; \exists i \in C: \mathcal{M}^C_i = v_C$.
\end{tabularx}
\begin{tabularx}{\linewidth}{l X}
        R4 & \textbf{Individual Rationality:} Each player must receive a reward that is at least as large as the value that player can achieve by themselves: $\forall C \subseteq N$, $\forall i \in N: \mathcal{M}^C_i \geq v_i.$
\end{tabularx}
\begin{tabularx}{\linewidth}{l X}
        R5 & \textbf{Non-participation:} For any $C \subseteq N$, the reward received by players outside the coalition $i \notin C$  equals the value they achieve by themselves: $\forall C \subseteq N$, $\forall i \notin C: \mathcal{M}^C_i = v_i$.
\end{tabularx}\\
\\
R1 and R4 are the same axioms for solution concepts as in cooperative game theory with non-replicable rewards. R2 and R3 have been adapted from Chalkiadakis and colleagues \cite{chalkiadakis2011computational}, since in our setting we can give every member of a coalition $C\subseteq N$ a reward of up to $v_C$. We only require weak efficiency, since strong efficiency would imply having to pay out $v_C$ to every member of the coalition, which would maximize welfare, but impede any fairness considerations. Axiom R5 is necessary in our setting, since a solution function describes rewards for every player regardless of whether they are in the coalition. Thus, it has to be ensured that they are treated as if they work alone.

\subsection{Fairness Axioms}

To ensure a fair allocation of rewards, the solution function must meet the following four fairness axioms:\\

\noindent
\begin{tabularx}{\linewidth}{l X}
    F1 & \textbf{Uselessness:} If player $u$'s data does not provide an increase to the value of any coalition, then player $u$ should always receive a valueless reward. Furthermore, all other players in a coalition with $u$ should get identical rewards as they would if $u$ were not part of the coalition.  For all $u \in N$,
\end{tabularx}
\begin{align}
        \begin{split}
            & \left( \forall C \subseteq N \setminus \{u\}: v_C = v_{C \cup \{u\}} \right)\\
            \Rightarrow & \biggl( \left(\forall C \subseteq N: \mathcal{M}^C_u = 0 \right)\\
            & \land \left( \forall C \subseteq N \setminus \{u\}, \forall i \in C: \mathcal{M}^C_i = \mathcal{M}^{C \cup \{u\}}_i \right) \biggr).
        \end{split}
    \end{align}
\begin{tabularx}{\linewidth}{l X}
    F2 & \textbf{Symmetry:} If players $i$ and $j$ contribute identically to any coalition, they should receive equal rewards: For all $i,j \in N$ s.t. $i\neq j$,
    {\begin{align}
        \begin{split}
            & \left( \forall C \subseteq N \setminus \{i, j\}: v_{C \cup \{i\}} = v_{C \cup \{j\}} \right)\\
            \Rightarrow & \left(\forall C \subseteq N \text{ with } i, j \in C: \mathcal{M}^C_i = \mathcal{M}^C_j\right).
        \end{split}
    \end{align}}
\end{tabularx}
\begin{tabularx}{\linewidth}{l X}
    F3 & \textbf{Strict Desirability:} If the value of at least one coalition improves more by including player $i$ instead of player $j$, but the reverse is not true, player $i$ should receive a more valuable reward than $j$. For all $i,j \in N$ s.t. $i\neq j$, 
    {\begin{align}
        \begin{split}
            & \left( \forall A \subseteq N \setminus \{i, j\}: v_{A \cup \{i\}} \geq v_{A \cup \{j\}} \right) \\
            \Rightarrow & \biggl( \forall C \subseteq N \text{ with } i, j \in C \text{ and } \exists B \subseteq C \setminus \{i, j\}, B \neq \emptyset,\\
            & \text{ s.t. } v_{B \cup \{i\}} > v_{B \cup \{j\}} ): \left( \mathcal{M}^C_i > \mathcal{M}^C_j\right) \biggr).
        \end{split}
    \end{align}}
\end{tabularx}\\

\noindent
\begin{tabularx}{\linewidth}{l X}
    F4 & \textbf{Strict Monotonicity:} Suppose the value of at least one coalition containing player $i$ improves (for example, by including more data of $i$), ceteris paribus, then player $i$ should receive a more valuable reward than before: Let $v$ and $v'$ denote any two value functions over all coalitions $C \subseteq N$ and $\mathcal{M}(v)^C_i$ and $\mathcal{M}(v')^C_i$ be the corresponding values of rewards received by player $i$ in coalition $C$. For all $C \in N$ and all $i \in C$,
    {\begin{align}
        \begin{split}
            &\biggl( \left( \forall A \subseteq C \setminus \{i\}: v'_{A \cup \{i\}} \geq v_{A \cup \{i\}} \right)\\
             \land &\left(v'_C > v_C \right) \land \left( \forall D \subseteq C \setminus \{i\}: v'_D = v_D \right) \biggr)\\
            \Rightarrow  & \left( \mathcal{M}(v')^C_i > \mathcal{M}(v)^C_i \right).
        \end{split}
    \end{align}}
\end{tabularx}

F1 and F2 are axioms of the Shapley value \cite{shapley1953value}. Axiom F3 was first introduced by Maschler and Peleg \cite{maschler1966characterization} and, in our setting, reduces to the fact that players who contribute larger values should receive larger rewards, compared to players with less valuable contributions. F4 is adapted from Young \cite{young1985monotonic}.

Together, these axioms were introduced by Sim and colleagues \cite{sim2020collaborative} as conditions for fairness. 
However, our axioms F3 and F4 are slightly weaker than theirs. In F3, we require $B \neq \emptyset$, that is, if two players $i$ and $j$ bring the same increase to every nonempty coalition, they may get the same reward in $C$, even if $v_i > v_j$. In F4, instead of $v'_C > v_C$, Sim and colleagues only required $\exists B \subseteq C \setminus \{i\}: v'_{B \cup \{i\}} > v_{B \cup \{i\}}$, that is, if there is some set for which $B$ $v'_{B \cup \{i\}} > v_{B \cup \{i\}}$, $i$ should get a higher reward not only in $B$ but also in coalitions containing $B$. However, we only require a larger reward for $i$ directly in that $B$. For additive or super-additive $v$, our axioms are equal to those of Sim and colleagues (because an increased value for some coalition $C$ would increase the value of any coalition containing $C$ as well). However, for sub-additive $v$, the original axioms can be incompatible with our F5 which we will formulate next. Therefore, this weakening is necessary.

We expand these fairness conditions by our so-called Balanced Reciprocity Axiom. It states that no player should profit from another player more than that player profits from the first player. If one player contributes significantly to another player's rewards but does not receive a comparable benefit in return, this creates an imbalance that can lead to strategic exploitation. The Balanced Reciprocity Axiom ensures that rewards are assigned proportionally to mutual benefit, preventing situations where a player can make a free ride on another player's contributions without fair compensation.

From an economic perspective, this axiom aligns with the idea of fair exchange: Just as in trade, when two parties interact, neither should end up significantly better off at the expense of the other. It discourages asymmetrical dependencies and ensures a more stable and cooperative environment.

This axiom also ensures that no player has more power over another: If one player threatens to leave, the other is not disproportionately affected. 
This prevents unfair dependencies where one player can coerce or manipulate another by leveraging their importance to the coalition. In cooperative settings, the power of a player comes from how much their presence affects the outcomes for others. If player $i$ benefits player $j$ significantly more than $j$ benefits $i$, then $i$ can use this imbalance as leverage. In extreme cases, $i$ can threaten to leave the coalition knowing that $j$ has much more to lose, forcing $j$ to make concessions.

We formulate this axiom as follows:\\
\\
\begin{tabularx}{\linewidth}{l X}
    F5 & \textbf{Balanced Reciprocity:} Any agent $i$ should profit from the participation of any other agent $j$ in the same way that $j$ profits from the participation of $i$. For all $C \subseteq N$ and all $i, j \in C$:
    {\begin{align}
        \mathcal{M}^C_i - \mathcal{M}^{C \setminus \{j\}}_i = \mathcal{M}^C_j - \mathcal{M}^{C \setminus \{i\}}_j,
    \end{align}}
\end{tabularx} \\
We consider this axiom as a fairness condition, and hence name it as the fifth fairness axiom F5. For any game $v \in \Gamma^N$ that fulfills the conditions stated in Section \ref{prob}, for a function $\mathcal{M}(v)$ to be a valid solution, the following must hold:\\
\\
\begin{tabularx}{\linewidth}{l X}
    R6 & \textbf{Fairness:} For all $C \subseteq N$ and $i \in N$, the rewards $\mathcal{M}^C_i$ must satisfy F1 to F5.
\end{tabularx} \\

\subsection{Uniqueness of the Solution}

Before we present our solution function for R1 to R6, we first show that if such a solution function exists, it must be unique:

\begin{theorem}\label{unique}
    Suppose that for some (not necessarily monotone) $v$ there exists a solution $\mathcal{M}(v)$ that fulfills axioms R3, R5 and F5. Then this solution is unique.
\end{theorem}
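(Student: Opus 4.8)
The plan is to prove uniqueness by strong induction on the coalition size $|C|$: I take two arbitrary candidate solutions $\mathcal{M}$ and $\mathcal{M}'$, both satisfying R3, R5 and F5, and show they agree column by column. The base cases are immediate. For $C=\emptyset$ and for every singleton $C=\{i\}$, axiom R5 fixes all entries $\mathcal{M}^C_j$ with $j\notin C$ to $v_j$, leaving only the diagonal entry $\mathcal{M}^{\{i\}}_i$ undetermined; but R3 forces the single player of a singleton coalition to receive the full value, so $\mathcal{M}^{\{i\}}_i=v_{\{i\}}=v_i$ in both solutions. Hence $\mathcal{M}$ and $\mathcal{M}'$ coincide on every coalition of size at most $1$, and, by R5, all off-diagonal entries of every column are fixed once and for all.

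For the inductive step I fix a coalition $C$ with $|C|=k\ge 2$ and assume the two solutions agree on all coalitions of size $<k$. For $i\notin C$ equality is given by R5. For $i,j\in C$, I rewrite F5 as $\mathcal{M}^C_i-\mathcal{M}^C_j=\mathcal{M}^{C\setminus\{j\}}_i-\mathcal{M}^{C\setminus\{i\}}_j$. The right-hand side involves only coalitions of size $k-1$, which are identical for $\mathcal{M}$ and $\mathcal{M}'$ by the induction hypothesis. Therefore every pairwise difference inside column $C$ is the same for both solutions, so the reward vectors $(\mathcal{M}^C_i)_{i\in C}$ and $((\mathcal{M}')^C_i)_{i\in C}$ differ by one common additive constant $\delta$; equivalently, F5 determines the entire column up to a single free scalar. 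This is the structural heart of the argument: F5 alone rigidifies the relative rewards but leaves one overall level undetermined.

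It remains to show $\delta=0$, and I expect this to be the main obstacle, because R3 is only existential — it asserts that \emph{some} player receives exactly $v_C$, without naming which one. The idea is that the fixed differences already determine the internal ranking of the rewards, so the set of reward-maximal players in $C$ is the same for both solutions; combining R3 with the feasibility ceiling carried by a solution (no reward may exceed $v_C$) forces the maximal reward to equal $v_C$ in each solution, which is a shared anchor pinning $\delta$ to $0$. Concretely, I would argue that any level other than the one making $\max_{i\in C}\mathcal{M}^C_i=v_C$ either leaves no player at $v_C$ (contradicting R3) or pushes some reward strictly above $v_C$. The delicate point is precisely this passage from the existential R3 to a \emph{unique} admissible level: it must marry the F5-induced rigidity of the differences with the efficiency/feasibility bound, and without the latter the level is genuinely not pinned down. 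Once $\delta=0$ is secured for $C$, the induction closes and uniqueness follows for all coalitions.
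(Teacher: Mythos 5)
Your proposal follows essentially the same route as the paper's proof, just phrased as strong induction rather than a minimal counterexample: the paper also fixes the smallest coalition $A$ on which two solutions $\mathcal{M}, \mathcal{M}'$ disagree (your induction hypothesis), uses R5 and R3 to settle all columns of size $\leq 1$, and then uses F5 together with agreement on all proper subcoalitions to transfer the disagreement from the deviating player $a$ to the player $k$ with $\mathcal{M}^A_k = v_A$ --- which is exactly your observation that F5 pins every within-column difference, so the two columns differ by a common shift $\delta$. The one substantive divergence is the step you yourself flagged as delicate: how $\delta = 0$ is anchored. You invoke ``the feasibility ceiling carried by a solution (no reward may exceed $v_C$)'' --- but that ceiling is axiom R2, which is \emph{not} among the theorem's hypotheses (only R3, R5, F5 are assumed). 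As written, your argument therefore proves uniqueness under R2$+$R3$+$R5$+$F5 rather than under the stated axioms. The paper instead concludes $\mathcal{M}'^A_k > v_A$ and declares this a violation of R3, i.e., it implicitly reads weak efficiency as ``$v_C$ is attained \emph{and} no reward exceeds it.''

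Your instinct that the purely existential reading of R3 does not pin the level is in fact correct, and it exposes a looseness in the paper's own proof rather than only in yours. Take $N = \{1,2\}$ with $v_\emptyset = 0$, $v_1 = 1$, $v_2 = 0$, $v_{\{1,2\}} = 1$. The small columns are forced ($\mathcal{M}^{\{i\}}_i = v_i$ by R3, off-coalition entries by R5), and F5 forces $\mathcal{M}^{\{1,2\}}_1 = \mathcal{M}^{\{1,2\}}_2 + 1$; but then both $\left(\mathcal{M}^{\{1,2\}}_1, \mathcal{M}^{\{1,2\}}_2\right) = (1,0)$ and $(2,1)$ satisfy R3 existentially (player $1$, respectively player $2$, sits at exactly $v_C = 1$), as well as R5 and F5. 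So under the literal R3 the level is genuinely free, exactly as you suspected, and the cap must come from somewhere: the paper locates it inside its reading of R3, you located it in R2. To match the paper's proof precisely, drop the appeal to R2 and instead adopt that stronger reading of R3; then your second horn closes the argument exactly as the paper does --- if $\delta > 0$, the player anchored at $v_C$ in $\mathcal{M}$ lands strictly above $v_C$ in $\mathcal{M}'$, which is the contradiction. Either way, you correctly identified the crux of the proof, which is more than the paper's own exposition makes explicit.
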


\begin{proof}
    We prove Theorem \ref{unique} by contradiction. Let $v \in \Gamma^N$ be an arbitrary but fixed function. Suppose that there are two different solutions $\mathcal{M}(v)$ and $\mathcal{M'}(v)$ that both satisfy R3, R5 and F5. Further, suppose that for some $i \in N$ and some $C \subseteq N$, $\mathcal{M}(v)^C_i \neq \mathcal{M'}(v)^C_i$. We will see that this is a contradiction.

    Let $A \subseteq N$ be the smallest coalition, for which there is some $a \in N$ so that $\mathcal{M}^A_a \neq \mathcal{M'}^A_a$. W.l.o.g. we will assume $\mathcal{M}^A_a < \mathcal{M'}^A_a$. If several such $A$ exist, pick any. Note that this implies that for all $B \subset A$ and all $i \in N$, $\mathcal{M}^B_i = \mathcal{M'}^B_i$.
    
    First, we note that due to R5, for all $C \subseteq N$ and all $i \notin C$, we have $\mathcal{M}^C_i = \mathcal{M'}^C_i = v_i$. Therefore, we must have $a \in A$. Furthermore, due to R3, for all $C \subseteq N$ with $\lvert C \rvert = 1$ and all $i\in C$, we have
    \begin{align}
        \mathcal{M}^C_i = \mathcal{M'}^C_i = \mathcal{M}^{\{i\}}_i = \mathcal{M'}^{\{i\}}_i = v_i = v_C.
    \end{align}
    Therefore, $|A| \geq 2$ must hold.

    Let $k$ be the player such that $\mathcal{M}_k^A = v_A$. Due to R3, such a player must exist. $k$ must be different from $a$, because if $k = a$, $\mathcal{M}^A_a < \mathcal{M'}^A_a$ would imply $\mathcal{M'}^A_k > \mathcal{M}^A_k = v_A$, which would violate R3. Thus,  $k \neq a$.

    F5 yields:
    \begin{align}
        \mathcal{M}^A_k = \mathcal{M}^A_a - \mathcal{M}^{A \setminus \{k\}}_a + \mathcal{M}^{A \setminus \{a\}}_k
    \end{align}
    and
    \begin{align}
        \mathcal{M'}^A_k = \mathcal{M'}^A_a - \mathcal{M'}^{A \setminus \{k\}}_a + \mathcal{M'}^{A \setminus \{a\}}_k.
    \end{align}
    Remember that for all $B \subset A$ and all $i \in N$, $\mathcal{M}^B_i = \mathcal{M'}^B_i$. Therefore, $\mathcal{M}^{A \setminus \{k\}}_a = \mathcal{M'}^{A \setminus \{k\}}_a$ and $\mathcal{M}^{A \setminus \{a\}}_k = \mathcal{M'}^{A \setminus \{a\}}_k$. But then $\mathcal{M'}^A_a > \mathcal{M}^A_a$ entails 
     $   \mathcal{M'}^A_k > \mathcal{M}^A_k = v_A $. 
    But then $\mathcal{M'}(v)$ does not satisfy R3, which is a contradiction to our assumption.
\end{proof}


\section{A Solution Function for Balanced Reciprocity}
We present a solution function that satisfies the axioms of Section \ref{axioms}, in particular F5, by explicitly balancing reciprocity among  participants. The intuition behind our approach is simple: The rewards are computed recursively, for $C \subseteq N$, the rewards depend on the rewards handed out in all $C' \subset N$ with $|C'| = \lvert C \rvert - 1$. For each $C \in N$, it is first examined which agent $k \in C$ should receive the maximum reward $v_C$. Then, all other agents $i \in C \setminus \{k\}$ are assigned their rewards according to the Balanced Reciprocity Axiom F5. The solution function is given in Algorithm \ref{solfun}.

\begin{algorithm}
\caption{Solution Function Algorithm}\label{solfun}
\begin{algorithmic}[1]
    \For {$i \in N$}
        \State {$\mathcal{M}^{\emptyset}_i = v_i$};
        \For {$C \subseteq N$, $\lvert C \rvert = 1$}
            \State {$\mathcal{M}^C_i = v_i$};
    \EndFor
    \EndFor
    \For {$s \in 2:n$ in ascending order}
        \For {$C \subseteq N$, $\lvert C \rvert = s$}
            \For {$i \notin C$}
                \State {$\mathcal{M}^C_i = v_i$};
            \EndFor
            \State {choose any $j \in C$};
            \State {$m^C_j = v_C$};
            \For {$i \in C, i \neq j$}
                \State {$m^C_i = m^C_j - \mathcal{M}^{C \setminus \{i\}}_j + \mathcal{M}^{C \setminus \{j\}}_i$};
            \EndFor
            \State {$k = \argmax_{i \in C} m_i^C$};
            \State {$\mathcal{M}^C_k = v_C$};
            \For {$i \in C, i \neq j$}
                \State {$\mathcal{M}^C_i = \mathcal{M}^C_k - \mathcal{M}^{C \setminus \{i\}}_k + \mathcal{M}^{C \setminus \{k\}}_i$};
            \EndFor
        \EndFor
    \EndFor

\end{algorithmic}
\end{algorithm}

Indeed, the solution function is the reward mechanism that we were looking for: It fulfills all standard axioms of incentivization and fairness, as well as the Balanced Reciprocity Axiom (Theorem \ref{thm:r1to6}). 

\begin{theorem}\label{thm:r1to6}
    The solution function described by Algorithm \ref{solfun} produces a solution $\mathcal{M}(v)$, which for any $v$ that is monotone with $v_{\emptyset} = 0$, fulfills R1 to R5, as well as F1 to F5.
\end{theorem}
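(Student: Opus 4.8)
The plan is to induct on the coalition size $s = |C|$, establishing R1--R5 and F1--F5 simultaneously. The base case $s \le 1$ is immediate: every player receives $v_i$, so R1 holds via $v_i = v_{\{i\}} \ge v_\emptyset = 0$, R2--R5 hold by inspection, and F1--F5 are vacuous or trivial. For the inductive step I fix $C$ with $|C| = s \ge 2$ and assume every axiom for all coalitions of size $< s$.

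The technical heart is a consistency (cocycle) lemma for the reciprocity differences. For $p,i \in C$ write $\Delta(p,i) := \mathcal{M}^{C\setminus\{i\}}_p - \mathcal{M}^{C\setminus\{p\}}_i$, a quantity built only from size-$(s-1)$ rewards. Applying the inductive F5 on the coalitions $C\setminus\{a\}$, $C\setminus\{b\}$, $C\setminus\{c\}$ I will show the cocycle identity $\Delta(a,b)+\Delta(b,c)=\Delta(a,c)$ for all distinct $a,b,c\in C$; this is a finite but somewhat delicate cancellation of size-$(s-1)$ terms. Two consequences follow. First, the tentative values $m^C_i$ of the algorithm differ from the final $\mathcal{M}^C_i$ only by an additive constant, so $\argmax_i m^C_i=\argmax_i \mathcal{M}^C_i$, the output is independent of the arbitrary pivot, and the recursion is exactly F5 at level $s$. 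Second, I obtain the pivot-free closed form
\[ \mathcal{M}^C_i = v_C - \max_{p\in C}\Delta(p,i), \]
which drives everything else.

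From the closed form the incentive and basic fairness axioms are routine. R3 and R5 hold by construction; since $\Delta(i,i)=0$ we get $\max_p\Delta(p,i)\ge 0$, hence $\mathcal{M}^C_i\le v_C$ (R2), which is precisely what the $\argmax$ choice of $k$ secures. Writing $\mathcal{M}^C_i = v_C - \mathcal{M}^{C\setminus\{i\}}_{k} + \mathcal{M}^{C\setminus\{k\}}_i$ for a maximizer $k$, inductive R2 and monotonicity bound the subtracted term by $v_C$ while inductive R1 and R4 bound the last term below by $v_i\ (\ge 0)$, yielding R1 and R4. F1 follows by a squeeze: F5 and the inductive F1 give $\mathcal{M}^C_u=\mathcal{M}^C_p-\mathcal{M}^{C\setminus\{u\}}_p$ for every $p$, and choosing $p$ with $\mathcal{M}^{C\setminus\{u\}}_p=v_{C\setminus\{u\}}$ together with R2 and $v_C=v_{C\setminus\{u\}}$ forces $0\le\mathcal{M}^C_u\le 0$. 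For F2 I use uniqueness: when $i,j$ contribute identically to every coalition, $v$ is invariant under the transposition $\sigma=(i\ j)$, the $\sigma$-relabelling of $\mathcal{M}$ again satisfies R3, R5, F5, so by Theorem \ref{unique} it equals $\mathcal{M}$, whence $\mathcal{M}^C_i=\mathcal{M}^C_j$.

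The genuinely hard part is the pair of strict axioms F4 and, above all, F3. For F4 I will prove the weak and strict statements together: writing $g(v)_p=\mathcal{M}(v)^{C\setminus\{i\}}_p-\mathcal{M}(v)^{C\setminus\{p\}}_i$, the hypotheses force $g(v')_p\le g(v)_p$ for all $p$ (with equality at $p=i$), so $\max_p g$ can only shrink, and combined with $v'_C>v_C$ this gives the strict increase. For F3 I first rewrite, via F5, $\mathcal{M}^C_i-\mathcal{M}^C_j=\mathcal{M}^{C\setminus\{j\}}_i-\mathcal{M}^{C\setminus\{i\}}_j$, then use equivariance under player relabelling (again a consequence of Theorem \ref{unique}) to replace $\mathcal{M}^{C\setminus\{i\}}_j$ by $\mathcal{M}(v^\sigma)^{C\setminus\{j\}}_i$, where $v^\sigma_D=v_{\sigma(D)}$. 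This reduces F3 to a monotonicity comparison of player $i$ in the single coalition $C\setminus\{j\}$ between $v$ and $v^\sigma$, and the weak monotonicity argument yields $\ge$. The main obstacle is the strict inequality here: the domination hypothesis only guarantees a strict value gap at some nonempty $B\cup\{i\}$, which may be a proper subcoalition of $C\setminus\{j\}$, and a strict improvement confined to a proper subcoalition can be absorbed by the outer $\max_p$ in the closed form. Controlling this masking is exactly the tension that forced the weakening of F4, and I expect it to be the crux of the whole argument and the step requiring the most care.
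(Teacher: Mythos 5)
Your overall architecture is sound and genuinely different in packaging from the paper's proof. The cocycle identity $\Delta(a,b)+\Delta(b,c)=\Delta(a,c)$ is correct (it follows from the inductive F5 applied to $C\setminus\{a\}$, $C\setminus\{b\}$, $C\setminus\{c\}$, and is exactly the combined content of the three ``help'' equations the paper manipulates term by term), and your two consequences check out: the tentative $m^C_i$ differ from the final values by the constant $\Delta(j,k)$, and the closed form $\mathcal{M}^C_i = v_C - \max_{p\in C}\Delta(p,i)$ holds because the argmax choice of $k$ gives $\Delta(p,k)\le 0$, hence $\Delta(p,i)=\Delta(p,k)+\Delta(k,i)\le\Delta(k,i)$. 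This buys you several things the paper obtains more laboriously or separately: pivot-independence (the paper's first corollary) falls out immediately rather than via the uniqueness theorem; R2 becomes the one-liner $\Delta(i,i)=0$; your F2 via $\sigma$-equivariance plus Theorem 1 is a clean replacement for the paper's inductive contradiction argument (the F2 hypothesis does make $v$ invariant under the transposition, and the relabelled solution does satisfy R3, R5, F5); and your F4 via $g(v')_p\le g(v)_p$ with equality of the top value replaced by $v'_C>v_C$ is tighter than the paper's somewhat muddled final step for $\mathcal{M}(v')^{C\setminus\{k\}}_i\ge\mathcal{M}(v)^{C\setminus\{k\}}_i$. One small omission: your F1 argument establishes only the first conclusion $\mathcal{M}^C_u=0$; the second conclusion $\mathcal{M}^{C\cup\{u\}}_i=\mathcal{M}^{C}_i$ still needs a line, though it is immediate from F5 at level $|C\cup\{u\}|$ together with $\mathcal{M}^{C\cup\{u\}}_u=\mathcal{M}^{(C\cup\{u\})\setminus\{i\}}_u=0$.

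The genuine gap is exactly where you suspect it: the strict inequality in F3, which you reduce but do not prove. Your reduction $\mathcal{M}^C_i-\mathcal{M}^C_j=\mathcal{M}(v)^{C\setminus\{j\}}_i-\mathcal{M}(v^\sigma)^{C\setminus\{j\}}_i$ is fine and yields $\ge$ by your weak-monotonicity lemma, but no strict-monotonicity lemma of the kind you would need can exist: the paper's own example (raising $v_{\{1,3\}}$ from $4$ to $5$ leaves player $1$'s and player $3$'s grand-coalition rewards unchanged) shows that a strict improvement confined to a proper subcoalition can be fully absorbed by the outer $\max_p$ — this is precisely why the authors weakened Sim et al.'s F4, and it kills any attempt to close your F3 reduction by a generic strictness-propagation argument. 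Strictness in F3 must instead exploit the \emph{full} dominance hypothesis ($v_{A\cup\{i\}}\ge v_{A\cup\{j\}}$ for \emph{all} $A$), and the paper does this by a case split on whether $\mathcal{M}^{C\setminus\{j\}}_i<v_{C\setminus\{j\}}$ or $\mathcal{M}^{C\setminus\{j\}}_i=v_{C\setminus\{j\}}$. In each case it locates a third player $a\in C\setminus\{i,j\}$ for which $\mathcal{M}^{C\setminus\{j\}}_a\ge\mathcal{M}^{C\setminus\{i\}}_a$ (strictly, or paired with the inductive F3 giving $\mathcal{M}^{C\setminus\{a\}}_i>\mathcal{M}^{C\setminus\{a\}}_j$; in the equality case the key step derives $\mathcal{M}^{C\setminus\{a,i\}}_j<\mathcal{M}^{C\setminus\{a,j\}}_i$ from inductive F3 on $C\setminus\{a\}$ and converts it via the F5 identities into $\mathcal{M}^{C\setminus\{j\}}_a>\mathcal{M}^{C\setminus\{i\}}_a$), and then concludes with F5 pivoted at $a$:
\begin{align}
    \mathcal{M}^C_i=\mathcal{M}^C_a-\mathcal{M}^{C\setminus\{i\}}_a+\mathcal{M}^{C\setminus\{a\}}_i>\mathcal{M}^C_a-\mathcal{M}^{C\setminus\{j\}}_a+\mathcal{M}^{C\setminus\{a\}}_j=\mathcal{M}^C_j.
\end{align}
In your closed-form language this amounts to showing that the strict gap survives the maximization because it holds \emph{uniformly} against the comparison player $j$, not just at one subcoalition; without an argument of this shape your F3 remains unproven, while everything else in your proposal either matches or improves on the paper.
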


\textbf{Sketch of proof:} For each of the axioms, we will give a sketch of the proof that it is fulfilled by Algorithm \ref{solfun}. Detailed proofs are given in Appendix \ref{proof:axioms}. Due to interdependencies of the proofs, we start by proving F5 and R2.\\

F5 can be shown by induction on the size of $C$. For any $C \in N$ with $\lvert C \rvert \leq 2$ it is easy to see that F5 must hold, since either there are not enough players in $C$ for it to apply, or the two players are directly assigned their reciprocal values in line 20. For $\lvert C \rvert > 2$, we can use the fact, that due to the inductive hypothesis, for $i, j, k \in C$, with $k$ being the player picked by the algorithm in line 20, the following equations must hold:
\begin{align}\label{e1}
        \mathcal{M}^{C \setminus \{i\}}_k - \mathcal{M}^{C \setminus \{i, j\}}_k = \mathcal{M}^{C \setminus \{i\}}_j - \mathcal{M}^{C \setminus \{i, k\}}_j.
\end{align}
\begin{align}\label{e2}
        \mathcal{M}^{C \setminus \{j\}}_i - \mathcal{M}^{C \setminus \{j, k\}}_i = \mathcal{M}^{C \setminus \{j\}}_k - \mathcal{M}^{C \setminus \{i, j\}}_k.
\end{align}
\begin{align}\label{e3}
        \mathcal{M}^{C \setminus \{k\}}_i - \mathcal{M}^{C \setminus \{j, k\}}_i = \mathcal{M}^{C \setminus \{k\}}_j - \mathcal{M}^{C \setminus \{i, k\}}_j
\end{align}

Starting with the fact, that due to line 20, $\mathcal{M}^C_i = \mathcal{M}^C_k - \mathcal{M}^{C \setminus \{i\}}_k + \mathcal{M}^{C \setminus \{k\}}_i$ and then substituting terms according to the equations above leads to
\begin{align}
    \mathcal{M}^C_i - \mathcal{M}^{C \setminus \{j\}}_i  = \mathcal{M}^C_j - \mathcal{M}^{C \setminus \{i\}}_j
\end{align}
Since this holds for all pairs $i, j \in C$ with $i \neq j$, the algorithm fulfills F5 for $C$.\\

R2: For any $C \in N$ with $\lvert C \rvert \leq 1$, it is easy to see that R2 must hold, since either there are not enough players in $C$ or there is one $i \in C$ with $\mathcal{M}^C_i = v_i = v_C$.

For larger $C$, if $i$ is the player $k$ chosen according to line 17, $\mathcal{M}^C_i = \mathcal{M}^C_k = v_C$ ensures R2. If $i \neq k$, since $k$ was chosen over $i$ in line 17, $m^C_k \geq m^C_i$, therefore, we have either directly $\mathcal{M}^{C \setminus \{k\}}_i \leq \mathcal{M}^{C \setminus \{i\}}_k$. $\mathcal{M}_i^C$ (if $i$ is the player $j$ picked in line 12, in relation to whom $m^C_k$ is calculated) or $\mathcal{M}^{C \setminus \{j\}}_k - \mathcal{M}^{C \setminus \{k\}}_j \geq \mathcal{M}^{C \setminus \{j\}}_i - \mathcal{M}^{C \setminus \{i\}}_j$. Here, substituting terms again using the equations \ref{e1} to \ref{e3}, yields $\mathcal{M}^{C \setminus \{k\}}_i \leq \mathcal{M}^{C \setminus \{i\}}_k$.

Together with the fact that $\mathcal{M}^C_k = v_C$ (line 18) and the assignment from line 20: $\mathcal{M}^C_i = \mathcal{M}^C_k - \mathcal{M}^{C \setminus \{i\}}_k + \mathcal{M}^{C \setminus \{k\}}_i$, this gives $\mathcal{M}^C_i \leq \mathcal{M}^C_k = v_C$ and therefore the algorithm fulfills R1 for $C$.\\

R1 can again be shown by induction over the size of $C$. For any $C \in N$ with $\lvert C \rvert \leq 2$ it is easy to see that R1 must hold, since here for all $i \in N$, $\mathcal{M}^C_i = v_i$ and by definition of $v$, $v_i \geq 0$.

For $\lvert C \rvert > 2$, if $i = k$, $\mathcal{M}^C_i = v_C \geq 0$ or if $i \neq k$, $\mathcal{M}^C_i = \mathcal{M}^C_k - \mathcal{M}^{C \setminus \{i\}}_k + \mathcal{M}^{C \setminus \{k\}}_i \geq v_C - v_{C \setminus \{i\}}$. This is because $\mathcal{M}^{C \setminus \{i\}}_k \leq v_{C \setminus \{i\}}$ due to R2 and $\mathcal{M}^{C \setminus \{k\}}_i \geq 0$ due to the inductive hypothesis. Finally, $v_C - v_{C \setminus \{i\}} \geq 0$ due to the properties of $v$. Therefore, the algorithm fulfills R1 for $C$.\\

R3 holds due to line 18, or, in the case of $\lvert C \rvert = 1$, due to line 4.\\

R4 is also shown by induction over the size of $C$. For any $C \in N$ with $\lvert C \rvert \leq 1$ it is easy to see that R4 must hold, since here for the single player $i \in C$, $\mathcal{M}^C_i = v_i$ holds.

For $\lvert C \rvert > 2$, we have $\mathcal{M}^C_k = v_C \geq v_k$. For $i \neq k$, $\mathcal{M}^C_i = \mathcal{M}^C_k - \mathcal{M}^{C \setminus \{i\}}_k + \mathcal{M}^{C \setminus \{k\}}_i \geq v_C - v_{C \setminus \{i\}} + v_i$. This is because $\mathcal{M}^{C \setminus \{i\}}_k \leq v_{C \setminus \{i\}}$ due to R2 and $\mathcal{M}^{C \setminus \{k\}}_i \geq v_i$ due to the inductive hypothesis. Finally, $v_C - v_{C \setminus \{i\}} + v_i \geq v_i$ due to the properties of $v$. Therefore, the algorithm fulfills R4 for $C$.\\

R5 holds simply due to line 10 or line 4, respectively.\\

F1 is shown by induction over the size of $C$.  For any $C \in N$ with $\lvert C \rvert \leq 1$ it is easy to see that F1 must hold, since here, if for some $u \in N$, $v_u =0$, $\mathcal{M}^C_u = 0$ holds. For $C = \{i\}$, with $i$ not being a ``useless'' player, when adding ``useless'' player $u \in N$, $\mathcal{M}^{C \cup \{u\}}_u = 0$ must hold, because otherwise, to fulfill F5, $i$ would have to get a reward larger than the coalition value.

For $\lvert C \rvert > 2$, by the same argument, $u$ must not be the maximum player, because otherwise other players would get rewards larger than the coalition value. Thus, the player $k \in C$ chosen in line 17 must be different from $u$.

Then, one can show that $\mathcal{M}^{C \setminus \{u\}}_k = \mathcal{M}^C_k$. 
This leads to 
\begin{align}
    \mathcal{M}^C_u = \mathcal{M}^C_k - \mathcal{M}^{C \setminus \{u\}}_k + \mathcal{M}^{C \setminus \{k\}}_u = v_C - v_C + 0 = 0    
\end{align}
and for all $i \in C \setminus \{u\}$: 
  $\mathcal{M}^C_i = $ $ \mathcal{M}^C_u - \mathcal{M}^{C \setminus \{i\}}_u + \mathcal{M}^{C \setminus \{u\}}_i
        = $ $ 0 - 0 + \mathcal{M}^{C \setminus \{u\}}_i =$ $ \mathcal{M}^{C \setminus \{u\}}_i$,
therefore F1 is fulfilled.\\


F2 is again shown by induction on the size of $C$.  For any $C \in N$ with $\lvert C \rvert \leq 2$, it is easy to see that F2 must hold, since here it only applies if $C = \{i, j\}$ and $\forall A \subseteq N \setminus \{i, j\}: v_{A \cup \{i\}} = v_{A \cup \{j\}}$. Combining $\mathcal{M}^C_i = \mathcal{M}^C_j - \mathcal{M}^{C \setminus \{i\}}_j + \mathcal{M}^{C \setminus \{j\}}_i$ with the fact that $\mathcal{M}^{C \setminus \{i\}}_j = \mathcal{M}^{\{j\}}_j = v_j$ and $\mathcal{M}^{C \setminus \{j\}}_i = \mathcal{M}^{\{i\}}_i = v_i$ yields $\mathcal{M}^C_i = \mathcal{M}^C_j$.
 For $\lvert C \rvert > 2$, one can show using the induction hypothesis and F6 that $\mathcal{M}^C_i \neq \mathcal{M}^C_j$ leads to a contradiction.\\

F3 is once again shown by induction on the size of $C$. For any $C \in N$ with $\lvert C \rvert \leq 2$, F3 does not apply, since for it to apply, $C$ must have at least two members. Therefore, the axiom holds here.

For $\lvert C \rvert > 2$, the proof works by a case analysis:  either  $\mathcal{M}_i^{C \setminus \{j\}} < v_{C \setminus \{j\}}$ holds or  $\mathcal{M}_i^{C \setminus \{j\}} = v_{C \setminus \{j\}}$. The case that  $\mathcal{M}_i^{C \setminus \{j\}} > v_{C \setminus \{j\}}$ is impossible.

If $\mathcal{M}_i^{C \setminus \{j\}} = v_{C \setminus \{j\}}$ and $\mathcal{M}_i^{C \setminus \{j\}} > \mathcal{M}_j^{C \setminus \{i\}}$, 
\begin{align}
    \mathcal{M}^C_i > \mathcal{M}^C_i - \mathcal{M}_i^{C \setminus \{j\}} + \mathcal{M}_j^{C \setminus \{i\}}
\end{align}
directly follows. Otherwise, if $\mathcal{M}_i^{C \setminus \{j\}} = v_{C \setminus \{j\}}$, one can show that there must be some $a \in C \setminus \{i, j\}$, for which $\mathcal{M}_j^{C \setminus \{a, i\}} < \mathcal{M}_i^{C \setminus \{a, j\}}$. Using this, one can show that in this case as well, $\mathcal{M}^{C \setminus \{j\}}_a > \mathcal{M}^{C \setminus \{i\}}_a$ must hold. Furthermore, due to the inductive hypothesis, $\mathcal{M}_i^{C \setminus \{a\}} > \mathcal{M}_j^{C \setminus \{a\}}$. So, the above (in-)equations hold for this case as well and therefore Algorithm \ref{solfun} fulfills F3 for $C$.

Finally, F4 is also shown by induction over the size of $C$. Let $v$ and $v'$ denote any two arbitrary but fixed value functions over all coalitions $C \subseteq N$ and $\mathcal{M}(v)^C_i$ that are monotonic with $v_{\emptyset} = v'_{\emptyset}$. For any $C \in N$ with $\lvert C \rvert = 1$, it is easy to see that if $v'_i > v_i$, then also $\mathcal{M}(v')^C_i = v'_i > v_i = \mathcal{M}(v)^C_i$.

For $\lvert C \rvert > 2$, if $v'_C > v_C$ and $\forall D \subseteq C \setminus \{i\}: v'_D = v_D$, due to theorem \ref{unique}, we have $\mathcal{M}(v')^{D}_i = \mathcal{M}(v)^{D}_i$ for all such $D.$ Then either $i$ is directly assigned $\mathcal{M}(v')^{C}_i = v'_C > v_C \geq \mathcal{M}(v)^{C}_i$, or 
\begin{align}
    \begin{split}
    \mathcal{M}(v')^{C}_i & = v'_C - \mathcal{M}(v')^{C \setminus \{i\}}_k + \mathcal{M}(v')^{C \setminus \{k\}}_i\\
    & > v_C - \mathcal{M}(v)^{C \setminus \{i\}}_k + \mathcal{M}(v)^{C \setminus \{k\}}_i
    = \mathcal{M}(v)^{C}_i,
    \end{split}
\end{align}
because Theorem \ref{unique}, $\mathcal{M}(v')^{C \setminus \{i\}}_k = \mathcal{M}(v)^{C \setminus \{i\}}_k$ and due to the inductive hypothesis as well as Theorem \ref{unique} $\mathcal{M}(v')^{C \setminus \{k\}}_i \geq \mathcal{M}(v)^{C \setminus \{k\}}_i$. Therefore, Algorithm \ref{solfun} fulfills F4 for $C$.\\

Thus, $\mathcal{M}(v)$ as computed by Algorithm \ref{solfun} fulfills R1 to R6 for any $v$ which is monotone with $v_{\emptyset} = 0$. \hfill
\textbf{End of sketch of proof}\\

To provide some intuition about our algorithm, we give a small example.
\begin{example}
Let $v$ be defined as follows:
\begin{center}
\tabcolsep=0.1cm
\begin{tabular}{|c|c|c|c|c|c|c|c|c|c|}
\hline
 $C$ & $\emptyset$ & $\{1\}$ & $\{2\}$ & $\{3\}$ & $\{4\}$ & $\{1,2\}$ & $\{1,3\}$ & $\{1,4\}$ & $\{2,3\}$\\ 
 \hline
 $v_C$ & $0$ & $1$ & $2$ & $1$ & $4$ & $3$ & $4$ & $7$ & $4$\\
 \hline
\end{tabular}\\

\begin{tabular}{|c|c|c|c|c|c|c|}
\hline
 $C$ & $\{2,4\}$ & $\{3, 4\}$ & $\{1, 2, 3\}$ & $\{1, 2, 4\}$ & $\{1, 3, 4\}$ & $\{2, 3, 4\}$\\ 
 \hline
 $v_C$ & $7$ & $6$ & $6$ & $7$ & $9$ & $9$\\
 \hline
\end{tabular}\\

\begin{tabular}{|c|c|}
\hline
 $C$ & $\{1, 2, 3, 4\}$\\ 
 \hline
 $v_C$ & $9$\\
 \hline
\end{tabular}
\end{center}

For empty or single-member coalitions, each player $i$ will just get assigned their standalone value. For larger coalitions,  the rewards are distributed as follows:

\begin{center}
\tabcolsep=0.1cm
\begin{tabular}{|c|c|c|c|c|c|c|}
\hline
 $C$ & $\{1,2\}$ & $\{1,3\}$ & $\{1, 4\}$ & $\{2,3\}$ & $\{2,4\}$ & $\{3, 4\}$\\ 
 \hline
 $\mathcal{M}(v)^C_1$ & $ 2 $ & $4$ & $4$ & $1$ & $1$ & $1$ \\
 \hline
 $\mathcal{M}(v)^C_2$ & $ 3 $ & $2$ & $2$ & $4$ & $5$ & $2$ \\
 \hline
 $\mathcal{M}(v)^C_3$ & $ 1 $ & $4$ & $1$ & $3$ & $1$ & $3$ \\
 \hline
 $\mathcal{M}(v)^C_4$ & $ 4 $ & $4$ & $7$ & $4$ & $7$ & $6$ \\
 \hline
\end{tabular}\\

\begin{tabular}{|c|c|c|c|c|c|}
\hline
 $C$ & $\{1, 2, 3\}$ & $\{1, 2, 4\}$ & $\{1, 3, 4\}$ & $\{2, 3, 4\}$ & $\{1, 2, 3, 4\}$\\ 
 \hline
 $\mathcal{M}(v)^C_1$ & $ 5 $ & $2$ & $7$ & $1$ & $5$ \\
 \hline
 $\mathcal{M}(v)^C_2$ & $ 5 $ & $3$ & $2$ & $7$ & $5$ \\
 \hline
 $\mathcal{M}(v)^C_3$ & $ 6 $ & $1$ & $6$ & $5$ & $8$  \\
 \hline
 $\mathcal{M}(v)^C_4$ & $ 4 $ & $7$ & $9$ & $9$ & $9$ \\
 \hline
\end{tabular}
\end{center}

Note how in $\{2,3,4\}$ player $2$ loses value if player $1$ joins. This is because, if player $2$ threatens to leave the coalition, $\{1,3,4\}$ will still retain a value of $9$ as compared to the value of $6$, which $\{3, 4\}$ would retain. Note further, how this example demonstrates how our F5 is incompatible with F3 as formulated by Sim and colleagues \cite{sim2020collaborative}. We have $v_2 > v_1$, but for all $C \setminus \{i, j\}$ with $C \neq \emptyset$, $v_{C \cup \{i\}} = v_{C \cup \{j\}}$. Still, in the grand coalition, $i$ and $j$ get equal rewards.

Finally, if one were to introduce a new function $v'$ with $v'(\{1, 3\}) = 5$ and otherwise $v'$ being equal to $v$, in the coalition $\{1, 3\}$, both player $1$ and player $3$ would get a larger reward. However, for the grand coalition we would get:

\begin{center}
\tabcolsep=0.1cm
\begin{tabular}{|c|c|}
\hline
 $C$ & $\{1, 2, 3, 4\}$\\ 
 \hline
 $\mathcal{M}(v')^C_1$ & $5$ \\
 \hline
 $\mathcal{M}(v')^C_2$ & $4$ \\
 \hline
 $\mathcal{M}(v')^C_3$ & $8$  \\
 \hline
 $\mathcal{M}(v')^C_4$ & $9$ \\
 \hline
\end{tabular}
\end{center}

Thus, player $1$ and player $3$ do not get an increase to their rewards, instead the reward of player $2$ is decreased. Therefore,  F4 as formulated by Sim and colleagues \cite{sim2020collaborative} is incompatible with our F5.
\end{example}

There are two relevant corollaries of the theorem. The first states that the alleged indeterminism of the Algorithm regarding which $j$ to choose (in line 12) is, in fact, none: the function calculated by the algorithm is independent of the chosen $j$.     
The second corollary expresses the fact that the Balanced Reciprocity Axiom together with the incentive axioms of weak efficiency and non-participation already entails the other axioms.  

\begin{corollary}
    For any monotone $v$, the solution $\mathcal{M}(v)$ obtained by Algorithm \ref{solfun} is identical, regardless of the player $j$ chosen in line 12.
\end{corollary}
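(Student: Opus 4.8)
The plan is to derive the corollary as a direct consequence of the two theorems already in hand, Theorem~\ref{unique} and Theorem~\ref{thm:r1to6}, rather than tracing the algorithm's arithmetic through every possible choice of $j$. The starting observation is structural: in Algorithm~\ref{solfun} the pivot $j$ selected in line~12 enters only through the scratch quantities $m^C_i$, and these are used only to determine the player $k = \argmax_{i\in C} m^C_i$ in line~17. The rewards actually committed to $\mathcal{M}(v)$ in lines~18--20 are expressed purely in terms of $k$ and of the already-fixed rewards $\mathcal{M}^{C'}$ for coalitions $C'$ with $|C'|<|C|$. Hence the only conceivable way the choice of $j$ could influence the output is by changing which player is picked as $k$.

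Next I would invoke Theorem~\ref{thm:r1to6}. Its statement asserts that Algorithm~\ref{solfun} --- which already contains the ``choose any $j$'' instruction --- produces, for every monotone $v$ with $v_\emptyset=0$, a solution fulfilling R1--R5 and F1--F5, and in particular R3, R5 and F5. Crucially, the proof of that theorem never commits to a canonical $j$: every step that uses the pivot relies solely on the defining inequality $m^C_k \ge m^C_i$ of an argmax, which holds for whichever $k$ is produced by whichever $j$. Thus the conclusion of Theorem~\ref{thm:r1to6} holds uniformly, namely for each admissible choice of $j$ in line~12 the resulting matrix $\mathcal{M}(v)$ satisfies R3, R5 and F5.

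The final step is to apply Theorem~\ref{unique}, which for a fixed $v$ guarantees that a solution satisfying R3, R5 and F5 is unique. Since every choice of $j$ yields such a solution, all of these solutions must coincide, which is exactly the claim of the corollary.

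The subtle point to nail down --- and the place where I expect the only genuine work --- is to confirm that there is no hidden circularity and that ties are handled. One must verify that Theorem~\ref{thm:r1to6} is truly established for an arbitrary selection of $j$ (and an arbitrary tie-break inside the $\argmax$), not merely for one fixed convention; this is what licenses applying the uniqueness theorem across all choices at once. The uniqueness route then dissolves the tie issue automatically: even if two different pivots led to two distinct maximizers $k$ (possible when the $\argmax$ is not unique), both runs still output solutions obeying R3, R5 and F5, so Theorem~\ref{unique} forces them to be equal. A more computational alternative would be to show directly, by induction on $|C|$ using F5 on the sub-coalitions of sizes $|C|-1$ and $|C|-2$, that replacing pivot $j$ by $j'$ shifts every scratch value $m^C_i$ by the same constant $\mathcal{M}^{C\setminus\{j\}}_{j'} - \mathcal{M}^{C\setminus\{j'\}}_{j}$ independent of $i$, so that $\argmax_{i} m^C_i$ is preserved as a set; but since this still requires a separate tie-breaking argument to conclude, the uniqueness-based proof is cleaner and is the one I would write.
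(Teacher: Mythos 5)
Your proposal is correct and follows essentially the same route as the paper's own proof: apply Theorem~\ref{thm:r1to6} to conclude that every run of Algorithm~\ref{solfun}, under any choice of $j$ in line~12, yields a solution satisfying R3, R5 and F5, and then invoke the uniqueness result of Theorem~\ref{unique} to force all such outputs to coincide. Your additional care about tie-breaking in the $\argmax$ and about checking that the proof of Theorem~\ref{thm:r1to6} never commits to a canonical $j$ is sound diligence beyond the paper's terse two-sentence version, but it is the same argument.
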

\begin{proof}
    Let $\mathcal{M}(v)$ be obtained by Algorithm \ref{solfun}. Due to Theorem \ref{thm:r1to6}, it fulfills R3, R5 and F5, regardless of how $j$ is chosen in line 12. Therefore, it is unique and any possible choice of $j$ has to result in the same solution.
\end{proof}

\begin{corollary}
    For any monotone $v$, R1, R2, R4 and F1 to F4 are entailed by R3, R5 and F5. 
\end{corollary}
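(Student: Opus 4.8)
The plan is to let the two theorems already proved in the excerpt do all the work. The decisive observation is that R3, R5 and F5 pin down the solution completely (Theorem \ref{unique}), while Algorithm \ref{solfun} is known to produce a solution satisfying \emph{all} of the axioms (Theorem \ref{thm:r1to6}). So any function obeying R3, R5 and F5 must coincide with the algorithm's output, and therefore inherits every axiom that output is known to satisfy. No direct manipulation of the balanced-reciprocity equations is needed.

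Concretely, I would fix a monotone game $v$; since by the definition of a game $v_\emptyset = 0$, the hypotheses of Theorem \ref{thm:r1to6} are met. Suppose $\mathcal{M}(v)$ is an arbitrary solution satisfying R3, R5 and F5. First I would invoke Theorem \ref{unique}, which guarantees that there is at most one solution satisfying R3, R5 and F5; hence $\mathcal{M}(v)$ is that unique solution. Next I would note that the output $\mathcal{M}^\star(v)$ of Algorithm \ref{solfun} satisfies, by Theorem \ref{thm:r1to6}, all of R1 to R5 and F1 to F5, and in particular R3, R5 and F5. By the uniqueness just established, $\mathcal{M}(v)$ and $\mathcal{M}^\star(v)$ are the same function. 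Finally, since $\mathcal{M}^\star(v)$ additionally satisfies R1, R2, R4 and F1 to F4, the identical function $\mathcal{M}(v)$ satisfies them as well, which is exactly the claim that R3, R5 and F5 entail R1, R2, R4 and F1 to F4.

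There is no genuine computational obstacle here: the whole content is the juxtaposition of an existence result with a uniqueness result. The only point demanding mild care is that the hypotheses of the two theorems line up on the same game. Theorem \ref{unique} holds for arbitrary (not necessarily monotone) $v$ and requires only R3, R5, F5, whereas Theorem \ref{thm:r1to6} requires $v$ to be monotone with $v_\emptyset = 0$. Running the argument under the monotonicity assumption stated in the corollary ensures both theorems apply to the same $v$ simultaneously, so the uniqueness conclusion can legitimately be transferred to the algorithm's fully axiom-satisfying solution.
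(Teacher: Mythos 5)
Your proposal is correct and matches the paper's own proof exactly: both combine the uniqueness guaranteed by Theorem \ref{unique} with the existence of a fully axiom-satisfying solution produced by Algorithm \ref{solfun} (Theorem \ref{thm:r1to6}), concluding that any solution satisfying R3, R5 and F5 must coincide with the algorithm's output and hence inherit R1, R2, R4 and F1 to F4. Your added remark about aligning the hypotheses of the two theorems on the same monotone $v$ is a sound (and slightly more careful) touch than the paper's two-line argument.
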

\begin{proof}
    Let $v$ be monotonic and $\mathcal{M}(v)$ satisfy R3, R5 and F6. Due to Theorem \ref{unique} it is the unique solution that satisfies these. Due to Theorem \ref{thm:r1to6}, algorithm \ref{solfun} finds this unique solution. Furthermore, also due to Theorem \ref{thm:r1to6} $\mathcal{M}(v)$ fulfills R1 to R5 and F1 to F5.
    
    Therefore, for any monotonic $v$, any solution satisfying R3, R5 and F5 must satisfy R1 to R5 and F1 to F5.
\end{proof}


\section{Scaled Shapley Falsifies Balanced Reciprocity}

We will now compare our solution function with the Shapley value. Sim and colleagues \cite{sim2020collaborative} presented a mechanism for a setting with replicable goods, similar to ours, and a solution function based on the Shapley value. As we will demonstrate, that solution function is not compatible with our Balanced Reciprocity Axiom.

The Shapley value \cite{shapley1953value} is defined as follows:
\begin{align}
    \phi_v(i) = \frac{1}{n!}\sum_{\pi \in \Pi_N}\left(v\left( S_{\pi, i}\cup \{i\} \right) - v \left(S_{\pi, i}\right) \right),
\end{align}
where $\Pi_N$ is the set of all possible permutations of $N$ and $S_{\Pi, i}$ is the coalition of parties preceding $i$ in permutation $\pi$.

However, the distribution of rewards according to the Shapley value would not fulfill R3 since no agent would get the maximum reward. Hence, Shapley rewards should be scaled in such a way that the agent with the largest contribution gets the maximum reward. How this can be done for any monotonic value function $v$ with $v_{\emptyset}\ = 0$ was shown by Sim and colleagues  \cite{sim2020collaborative} who introduced a scaled $\rho$-Shapley value.

To stay in line with our definition of a solution function from Equation \ref{defsolfun}, we adapt the scaled $\rho$-Shapley value for any coalition $C \subseteq N$, distributing a value of $v_i$ to all players $i \notin C$ outside a coalition $C \subseteq N$:
\begin{align}
    r(v)^C_i = \begin{cases}
    \left(\frac{\phi_i}{\phi^{\ast}_C} \right)^{\rho} \times v_C& \text{if $i \in C,$}\\
    v_i & \text{else},\\
    \end{cases}
\end{align}
where $\phi^{\ast}_C = \max_{i \in C}\phi_i$ and $0<\rho<1$. 

Sim and his colleagues \cite{sim2020collaborative} showed that for an appropriately chosen $\rho$, $r(v)^C_i$  fulfills R1 to R4, as well as F1 to F4. It is easy to see that for all $C \subseteq N$ and $i \in C$, $r(v)^C_i = v_i$, thus the scaled $\rho$-Shapley value fulfills R5 as well.

However, $r(v)^C_i$ is not compatible with the new Balanced Reciprocity Axiom. In particular, this result shows that our new axiom F5 is not entailed by the other axioms.  

\begin{theorem}
    The scaled $\rho$-Shapley value $r(v)^C_i$ does not fulfill F5.
\end{theorem}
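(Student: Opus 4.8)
The plan is to prove the statement by exhibiting a counterexample, since to show that a mechanism violates an axiom it suffices to produce a single monotone game with $v_{\emptyset} = 0$ on which F5 fails. I would look for the smallest possible witness, namely a two-player game on $C = \{i,j\}$ with $n = 2$. This is the natural place to start because F5 on a two-element coalition collapses to a single scalar equation: since singleton and empty coalitions return standalone values, $r(v)^{C \setminus \{j\}}_i = r(v)^{\{i\}}_i = v_i$ and $r(v)^{C \setminus \{i\}}_j = v_j$, so F5 reduces to $r(v)^C_i - v_i = r(v)^C_j - v_j$.

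First I would compute the two-player Shapley values explicitly, $\phi_i = \tfrac{1}{2}(v_C + v_i - v_j)$ and $\phi_j = \tfrac{1}{2}(v_C + v_j - v_i)$, noting that $\phi_i + \phi_j = v_C$ and $\phi_i - \phi_j = v_i - v_j$. Assuming without loss of generality $\phi_i \geq \phi_j > 0$ (i.e. $v_i \geq v_j$), the scaling gives $r(v)^C_i = v_C$ and $r(v)^C_j = (\phi_j/\phi_i)^{\rho}\, v_C$. Substituting into the reduced F5 equation and writing $t = \phi_j/\phi_i \in (0,1]$, a short calculation turns the requirement into the clean algebraic identity $2\,t^{1-\rho} = 1 + t$. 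Thus F5 holds on a two-player game exactly when its Shapley ratio $t$ and the exponent $\rho$ satisfy this relation.

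The final step is to observe that, for any fixed $\rho \in (0,1)$, this relation fails for generic $t$: the map $t \mapsto 2\,t^{1-\rho} - (1+t)$ is not identically zero (for instance it tends to $-1$ as $t \to 0^{+}$), so it vanishes only on an isolated set of ratios. I would then pick any monotone two-player game whose Shapley ratio avoids this set --- for instance $v_i = 3$, $v_j = 1$, $v_C = 4$, giving $t = 1/3$, for which $2\,t^{1-\rho} = 1 + t$ forces a single value $\rho \approx 0.63$ --- so that F5 fails for every other $\rho$. The main obstacle is conceptual rather than computational: because F5 genuinely does hold along the curve $2\,t^{1-\rho} = 1 + t$ (in particular in the symmetric case $t = 1$), the argument must make explicit that violation is the generic situation and must rule out these coincidental parameter values, while also verifying the side conditions (monotonicity, $v_{\emptyset} = 0$, and $\phi_i > 0$, so that the max-scaling $\phi^{\ast}_C$ and the ratio $t$ are well defined).
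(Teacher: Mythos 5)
Your proposal is correct, and it takes a genuinely different route from the paper. The paper argues indirectly through its uniqueness machinery: since the scaled $\rho$-Shapley value satisfies R3 and R5, and since (by Theorems \ref{unique} and \ref{thm:r1to6}) the only solution satisfying R3, R5 and F5 is the one computed by Algorithm \ref{solfun}, it suffices to show $r(v) \neq \mathcal{M}(v)$ for every $\rho$; the paper does this on a fixed three-player game, where matching $\mathcal{M}(v)^{\{1,2\}}_1 = 2$ pins down $\rho = \log_2(3)-1$, and that same $\rho$ then fails on the coalition $\{1,3\}$. You instead test F5 directly on a two-player game, with no appeal to uniqueness or to the algorithm at all: your reduction of F5 to $r(v)^C_i - v_i = r(v)^C_j - v_j$ is right, your Shapley computation $\phi_i = \tfrac{1}{2}(v_C+v_i-v_j)$ is right, and the resulting criterion $2t^{1-\rho} = 1+t$ with $t = \phi_j/\phi_i$ checks out (it is equivalent to $(1+t)(1-t^{\rho}) = 1-t$). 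Your handling of the quantifier over $\rho$ is also sound in outline, but it is the one place to be fully explicit when writing this up: the single game $t = 1/3$ only excludes every $\rho$ other than $\rho = \ln 2/\ln 3$, so for the remaining $\rho$ you need the generic step you sketch, namely that $t \mapsto 2t^{1-\rho} - (1+t)$ is continuous, tends to $-1$ as $t \to 0^{+}$, and hence is nonzero for some (indeed all but isolated) $t \in (0,1)$, together with the easy observation that every such $t$ is realized by a monotone game with $v_{\emptyset}=0$ and $\phi_i, \phi_j > 0$, e.g. $v_i = 1$, $v_j = t$, $v_C = 1+t$. Comparing the two: your argument is more elementary and self-contained, and it yields a sharper byproduct (an exact characterization of when F5 holds for two players, showing violation is generic rather than exhibiting one numeric failure); the paper's argument leans on already-proven theorems and additionally illustrates the stronger point that even forcing coalition-local agreement with the unique mutually fair solution cannot be extended across coalitions.
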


\begin{proof}
We will prove this using a simple example. Let $v$ be defined as follows:
\begin{center}
\tabcolsep=0.1cm
\begin{tabular}{|c|c|c|c|c|c|c|c|c|c|}
\hline
 $C$ & $\emptyset$ & $\{1\}$ & $\{2\}$ & $\{3\}$ & $\{1,2\}$ & $\{1,3\}$ & $\{2,3\}$ & $\{1,2,3\}$\\ 
 \hline
 $v_C$ & $0$ & $1$ & $2$ & $3$ & $ 3 $ & $4$ & $5$ & $6$\\
 \hline
\end{tabular}
\end{center}

It is easy to see, that the scaled $\rho$-Shapley values and their values $\mathcal{M}(v)^C_i$ differ:
For $\rho = 1$ the scaled $\rho$-Shapley values for coalitions containing at least two members are:
\begin{center}
\begin{tabular}{|c|c|c|c|c|}
\hline
 $C$ & $\{1,2\}$ & $\{1,3\}$ & $\{2,3\}$ & $\{1,2,3\}$ \\ 
 \hline
 $r(v)^C_1$ & $ \frac{3}{2} $ & $\frac{4}{3}$ & $1$ & $2$ \\
 \hline
  $r(v)^C_2$ & $ 3 $ & $2$ & $\frac{10}{3}$ & $4$ \\
 \hline
  $r(v)^C_3$ & $ 3 $ & $4$ & $5$ & $6$ \\
 \hline
\end{tabular}
\end{center}

However, their values $\mathcal{M}(v)^C_i$ as calculated by algorithm \ref{solfun} are: 
\begin{center}
\begin{tabular}{|c|c|c|c|c|}
\hline
 $C$ & $\{1,2\}$ & $\{1,3\}$ & $\{2,3\}$ & $\{1,2,3\}$ \\ 
 \hline
 $\mathcal{M}(v)^C_1$ & $ 2 $ & $2$ & $1$ & $3$ \\
 \hline
  $\mathcal{M}(v)^C_2$ & $ 3 $ & $2$ & $4$ & $5$ \\
 \hline
  $\mathcal{M}(v)^C_3$ & $ 3 $ & $4$ & $5$ & $6$ \\
 \hline
\end{tabular}
\end{center}

Note, that $\mathcal{M}(v)^{\{1, 2\}}_1 = 2$. We can find a fitting $\rho$, so that $r(v)^{\{1, 2\}}_1$ is equal.
\begin{align}
    r(v)^{\{1, 2\}}_1 = \left(\frac{\phi_i}{\phi^{\ast}_C} \right)^{\rho} \times v_C = \left(\frac{1}{2} \right)^{\rho} \times 3
\end{align}
using $\rho = \log_2(3) - 1$, we get
\begin{align}
    r(v)^{\{1, 2\}}_1 = \left(\frac{1}{2} \right)^{\log_2(3) - 1} \times 3 = 2 = \mathcal{M}(v)^{\{1, 2\}}_1 = 2.
\end{align}

However, this $\rho$ will not be sufficient to completely align all $r(v)^C_i$ with all $\mathcal{M}(v)^C_i$. When using $\rho = \log_2(3) - 1$ for $r(v)^{\{1, 3\}}_1$, we get
\begin{align}
    r(v)^{\{1, 3\}}_1 = \left(\frac{\phi_i}{\phi^{\ast}_C} \right)^{\log_2(3) - 1} \times v_C = \left(\frac{1}{3} \right)^{\rho} \times 4 \approx 2.1036,
\end{align}
which is different from $\mathcal{M}(v)^{\{1, 3\}}_1 = 2$.
\end{proof}

Inspecting the proof, one may observe that the scaling factor $\rho$ can be used to locally achieve balanced reciprocity between two players.  However, a single $\rho$ can not be sufficient to achieve global balanced reciprocity even in this simple example.


\section{Related Work}\label{rw}

In this paper, we contribute to the game-theoretical foundations of cooperative AI \cite{conitzer24socialChoice}, with basic motivation stemming from the problem of collective data sharing for collaborative learning \cite{karimireddy2022mechanisms, sim2020collaborative, xu2021gradient, filter2024mechanisms}. A central challenge in this domain is to develop fair and effective reward mechanisms that evaluate the data contributions of each participant, ensuring adequate recognition, incentivizing participation, and maintaining fairness. To address this, numerous reward mechanisms inspired by cooperative game theory have been explored. One of them is the Shapley value \cite{lehrer1988axiomatization}, which we discussed before in this paper as a counterexample to the axiom of balanced reciprocity. Another well-known related reward function is the Banzhaf value \cite{banzhaf65weightedVoting}. Both reward mechanisms are linear with respect to the canonical addition and scalar multiplication of games and hence allow a uniform treatment in a very general setting based on concepts from linear algebra \cite{grabisch16bases}. Such generality is not easily achievable with our reward mechanism as it is not linear. This point also applies to the scaled Shapley value of Sim and colleagues and the reward mechanism of Zou and colleagues \cite{zou2020sharing} (see below).

Zou and colleagues propose a surplus allocation method for super-additive games with non-replicable goods, the proportional surplus division value \cite{zou2020sharing}. In their framework, players possess the power to act as separators, being capable of sabotaging cooperation and reducing the game to its additive core, where each participant receives only their stand-alone value. Their proposed solution distributes the surplus so that for any pair of players, the potential loss each would incur from the other defecting as a separator is proportionally equal to their standalone values. This is enforced through their axiom of "proportional loss under separatorization." This axiom resembles our balanced reciprocity axiom, but, in contrast, it contains an additional normalization factor. Moreover, we stress again that we do not restrict us to super-additive games and we deal with replicable goods.

The even stronger property of additivity of games is usually not investigated in the cooperative-game community as additive games are completely determined by the values of the individuals. However, the generalized forms of additive games (so-called k-additive games) were discussed in the context of non-balanced coalitional games \cite{gonzalez15preserving}.

The solution functions in classical cooperative game theory (in particular, the Shapley value and the Banzhaf value) assume that rewards are scarce and divisible resources, which may not align with the nature of machine learning models that can be replicated at no cost. We already referred to the work of Sim and colleagues \cite{sim2020collaborative} who address this discrepancy by adapting the Shapley value framework to settings where rewards are replicable goods, specifically ML models. In the resulting $\rho$-parameterized Shapley reward scheme, each participant receives an individualized model, where the quality of the model reflects their contribution to the collaborative training process. The authors also show that welfare maximization and fairness (based on the Shapley value) are at odds with each other. A nice feature of the  parameterization is that it allows for a fine-grained picture of the relation between welfare maximization and fairness.


\section{Conclusion}

We proposed a novel mechanism for fair reward allocation in the general setting of monotonic cooperative
games. Extending recent work on fair reward allocation for arbitrarily replicable resources, we were able to show that a reward mechanism exists that fulfills mutual fairness as explicated by the axiom of  Balanced Reciprocity Axiom. The presented algorithm  calculates  a function that is neither captured by the classical Shapley value nor any  other classical solution functions such as Banzhaf (due to their linearity).  But it also differs from recently proposed mechanisms such as the scaled Shapley value \cite{sim2020collaborative}. In fact,  as the uniqueness result shows, the provided mechanism is the only one that fulfills mutual fairness. 

For the results, we assumed that the games are monotonic, which holds for numerous practical scenarios. Interestingly, for these games the Balanced Reciprocity Axiom is that strong that it already (together with two axioms) entails the other incentivization and fairness axioms. So, one might speculate whether the Axiom of Balanced Reciprocity can be relaxed to regain independence of the axioms even for monotonic games.

Another question for future work concerns the aspect of multiple coalitions \cite{gonzalez16multicoalitional}. Classical cooperative games usually assume super-additivity  such that the formation of a grand coalition is ensured. Hence, the reward function has to be defined only for the grand coalition. We had to define the reward function for arbitrary coalitions as we did not assume super-additivity. An interesting question is what kind of coalition structures we allow: whether multiple coalitions are allowed and, if so,  whether they must be disjoint.   
Related to this question is the dynamics of coalition formation, where players enter and
exit over time.



\appendix

\section{Proof of Theorem \ref{thm:r1to6}}\label{proof:axioms}

\subsection{Proof of F5}
\begin{tabularx}{\linewidth}{l X}
    F5 & \textbf{Balanced Reciprocity:} For all $C \subseteq N$ and all $i, j \in C$:
    {\begin{align}
        \mathcal{M}^C_i - \mathcal{M}^{C \setminus \{j\}}_i = \mathcal{M}^C_j - \mathcal{M}^{C \setminus \{i\}}_j
    \end{align}}
\end{tabularx}

Let $v$ be an arbitrary but fixed function that is monotonic with $v(\emptyset) = 0$. We will prove the that algorithm \ref{solfun} fulfills F2 by induction over the size of $C$.\\

\textbf{Base case:} Let $C \subseteq N$ with $\lvert C \rvert \leq 1$. Here, F5 does not yet apply, since it only takes into account coalitions with at least two members.\\

\textbf{Inductive hypothesis:} Suppose, algorithm \ref{solfun} fulfills F5 for all coalitions $C \subseteq N$ with $\lvert C \rvert < s$ up to some size $s$, $s \geq 1$. For any $i, j \in C$ with $i \neq j$, if $i = k$ or $j = k$ with $k$ being the index picked in line 17, the axiom holds simply due to line 20 of the algorithm (This also concludes the proof for $\lvert C \rvert = 2$).

Assume now, $i \neq k \neq j$. Then $\lvert C \rvert \geq 3$ and, due line 20, the following two statements must hold:
\begin{align}
    \mathcal{M}^C_i = \mathcal{M}^C_k - \mathcal{M}^{C \setminus \{i\}}_k + \mathcal{M}^{C \setminus \{k\}}_i,
\end{align}
\begin{align}\label{eq:help5}
    \mathcal{M}^C_j = \mathcal{M}^C_k - \mathcal{M}^{C \setminus \{j\}}_k + \mathcal{M}^{C \setminus \{k\}}_j,
\end{align}
We will additionally make use of the following equations, which hold, because due to the inductive hypothesis, with $i, j, k \in C$, the algorithm fulfills F5 for $C \setminus \{i\}$, $C \setminus \{j\}$ and $C \setminus \{k\}$.:
    \begin{align}\label{eq:help1}
        \mathcal{M}^{C \setminus \{i\}}_k - \mathcal{M}^{C \setminus \{i, j\}}_k = \mathcal{M}^{C \setminus \{i\}}_j - \mathcal{M}^{C \setminus \{i, k\}}_j.
    \end{align}
    \begin{align}\label{eq:help2}
        \mathcal{M}^{C \setminus \{j\}}_i - \mathcal{M}^{C \setminus \{j, k\}}_i = \mathcal{M}^{C \setminus \{j\}}_k - \mathcal{M}^{C \setminus \{i, j\}}_k.
    \end{align}
    \begin{align}\label{eq:help3}
        \mathcal{M}^{C \setminus \{k\}}_i - \mathcal{M}^{C \setminus \{j, k\}}_i = \mathcal{M}^{C \setminus \{k\}}_j - \mathcal{M}^{C \setminus \{i, k\}}_j,
    \end{align}

The following chain of equations holds:
\begin{align}
    \mathcal{M}^C_i = \mathcal{M}^C_k - \mathcal{M}^{C \setminus \{i\}}_k + \mathcal{M}^{C \setminus \{k\}}_i
\end{align}
Substituting $\mathcal{M}^{C \setminus \{i\}}_k$ according to equation \ref{eq:help1} yields:
\begin{align}
    \begin{split}
        & \mathcal{M}^C_i = \mathcal{M}^C_k + \mathcal{M}^{C \setminus \{k\}}_i\\
        & - \left( \mathcal{M}^{C \setminus \{i\}}_j + \mathcal{M}^{C \setminus \{i, j\}}_k - \mathcal{M}^{C \setminus \{i, k\}}_j \right)\\
        \Rightarrow & \mathcal{M}^C_i = \mathcal{M}^C_k + \mathcal{M}^{C \setminus \{k\}}_i +\mathcal{M}^{C \setminus \{i, k\}}_j -\mathcal{M}^{C \setminus \{i\}}_j - \mathcal{M}^{C \setminus \{i, j\}}_k\\
    \end{split}
\end{align}
Substituting $\mathcal{M}^{C \setminus \{i, j\}}_k$ according to equation \ref{eq:help2} yields:
\begin{align}
    \begin{split}
        & \mathcal{M}^C_i = \mathcal{M}^C_k + \mathcal{M}^{C \setminus \{k\}}_i +\mathcal{M}^{C \setminus \{i, k\}}_j -\mathcal{M}^{C \setminus \{i\}}_j\\
        &- \left( \mathcal{M}^{C \setminus \{j\}}_k + \mathcal{M}^{C \setminus \{j, k\}}_i - \mathcal{M}^{C \setminus \{j\}}_i \right)\\
        \Rightarrow & \mathcal{M}^C_i - \mathcal{M}^{C \setminus \{j\}}_i  = \mathcal{M}^C_k + \mathcal{M}^{C \setminus \{k\}}_i + \mathcal{M}^{C \setminus \{i, k\}}_j -\mathcal{M}^{C \setminus \{i\}}_j\\
        &- \mathcal{M}^{C \setminus \{j\}}_k - \mathcal{M}^{C \setminus \{j, k\}}_i
    \end{split}
\end{align}
Substituting $\mathcal{M}^C_k$ according to equation \ref{eq:help5} yields:
\begin{align}
    \begin{split}
        & \mathcal{M}^C_i - \mathcal{M}^{C \setminus \{j\}}_i  = \left( \mathcal{M}^C_j + \mathcal{M}^{C \setminus \{j\}}_k - \mathcal{M}^{C \setminus \{k\}}_j \right)\\
        & + \mathcal{M}^{C \setminus \{k\}}_i + \mathcal{M}^{C \setminus \{i, k\}}_j -\mathcal{M}^{C \setminus \{i\}}_j - \mathcal{M}^{C \setminus \{j\}}_k - \mathcal{M}^{C \setminus \{j, k\}}_i\\
        \Rightarrow & \mathcal{M}^C_i - \mathcal{M}^{C \setminus \{j\}}_i  = \mathcal{M}^C_j - \mathcal{M}^{C \setminus \{i\}}_j\\
        & + \mathcal{M}^{C \setminus \{k\}}_i - \mathcal{M}^{C \setminus \{j, k\}}_i -\mathcal{M}^{C \setminus \{k\}}_j + \mathcal{M}^{C \setminus \{i, k\}}_j\\
    \end{split}
\end{align}
Here, due to equation \ref{eq:help3} the last four terms cancel out, thus
\begin{align}
    \mathcal{M}^C_i - \mathcal{M}^{C \setminus \{j\}}_i  = \mathcal{M}^C_j - \mathcal{M}^{C \setminus \{i\}}_j
\end{align}
Therefore, for all pairs $i, j \in C$ with $i \neq j$, we have $\mathcal{M}^C_i - \mathcal{M}^{C \setminus \{j\}}_i = \mathcal{M}^C_j - \mathcal{M}^{C \setminus \{i\}}_j$ and therefore the algorithm fulfills F5 for $C$. By the principle of mathematical induction, algorithm \ref{solfun} fulfills F5 for all $C \subseteq N$ with $\lvert C \rvert = s$ and $s \leq |N|$.


\subsection{Proof of R2}
\begin{tabularx}{\linewidth}{l X}
        R2 &\textbf{Feasibility:} $\forall C \subseteq N$, $\forall i \in C: \mathcal{M}^C_i \leq v_C$.
\end{tabularx}

Let $v$ be an arbitrary but fixed function that is monotonic with $v(\emptyset) = 0$, let $C \subseteq N$. For $C = \emptyset$, this holds trivially, since no $i \in C$ exists.

If $\lvert C \rvert \leq 1$, let $i$ be the one player in $C$. Then $v_C = v_i$ due to line 4 and therefore $\mathcal{M}_i^C = v_i = v_C$.

Now, suppose $\lvert C \rvert \geq 2$. We first note that if $i$ is index $k$ picked in line 17, due to line 18  $\mathcal{M}_i^C = v_C$, thus the axiom is fulfilled here.

If $i = j \neq k$, that is $i$ was picked as $j$ in line 12, but then some other $k \in C$, $k \neq i$ was chosen in line 17, we have $m_i^C \leq m_k^C$, otherwise $i$ would have been picked in line 17. Since $i = j$, $k$ was assigned $m^C_k = m^C_i - \mathcal{M}^{C \setminus \{k\}}_i + \mathcal{M}^{C \setminus \{i\}}_k$ in line 15. Thus, 

\begin{align}
    \begin{split}
         & m_i^C  \leq m^C_i - \mathcal{M}^{C \setminus \{k\}}_i + \mathcal{M}^{C \setminus \{i\}}_k = m_k^C,
    \end{split}
\end{align}

so $\mathcal{M}^{C \setminus \{k\}}_i \leq \mathcal{M}^{C \setminus \{i\}}_k$. $\mathcal{M}_i^C$ is then assigned in line 20: $\mathcal{M}^C_i = \mathcal{M}^C_k - \mathcal{M}^{C \setminus \{i\}}_k + \mathcal{M}^{C \setminus \{k\}}_i$. With $\mathcal{M}_k^C = v_C$ from line 18, this yields $\mathcal{M}_i^C \leq \mathcal{M}_k^C = v_C$.

Lastly, if $j \neq i \neq k$, since $k$ was selected in line 17, $m^C_k \geq m_i^C$ and thus from line 15
\begin{align}
    \begin{split}
        & m^C_k = m^C_j - \mathcal{M}^{C \setminus \{k\}}_j + \mathcal{M}^{C \setminus \{j\}}_k\\
        \geq & m^C_j - \mathcal{M}^{C \setminus \{i\}}_j + \mathcal{M}^{C \setminus \{j\}}_i = m^C_i.
    \end{split}
\end{align}

Therefore, $\mathcal{M}^{C \setminus \{j\}}_k - \mathcal{M}^{C \setminus \{k\}}_j \geq \mathcal{M}^{C \setminus \{j\}}_i - \mathcal{M}^{C \setminus \{i\}}_j$. By using the three equations \ref{eq:help1}, \ref{eq:help2} and \ref{eq:help3} from the proof of F5, which hold, because the algorithm fulfills R5, we can show the following:
\begin{align}
    \begin{split}
        & \mathcal{M}^{C \setminus \{j\}}_k - \mathcal{M}^{C \setminus \{k\}}_j \geq \mathcal{M}^{C \setminus \{j\}}_i - \mathcal{M}^{C \setminus \{i\}}_j
    \end{split}
\end{align}
Substituting $\mathcal{M}^{C \setminus \{k\}}_j$ according to equation \ref{eq:help3} and $\mathcal{M}^{C \setminus \{j\}}_j$ according to equation \ref{eq:help2} yields:
\begin{align}
    \begin{split}
        \Rightarrow & \mathcal{M}^{C \setminus \{j\}}_k - \left( \mathcal{M}^{C \setminus \{k\}}_i + \mathcal{M}^{C \setminus \{i, k\}}_j - \mathcal{M}^{C \setminus \{j, k\}}_i \right)\\
        & \geq \mathcal{M}^{C \setminus \{j\}}_i - \left( \mathcal{M}^{C \setminus \{i\}}_k + \mathcal{M}^{C \setminus \{i, k\}}_j - \mathcal{M}^{C \setminus \{i, j\}}_k \right)\\
        \Rightarrow & \mathcal{M}^{C \setminus \{j\}}_k - \mathcal{M}^{C \setminus \{k\}}_i + \mathcal{M}^{C \setminus \{j, k\}}_i\\
        & \geq \mathcal{M}^{C \setminus \{j\}}_i - \mathcal{M}^{C \setminus \{i\}}_k + \mathcal{M}^{C \setminus \{i, j\}}_k
    \end{split}
\end{align}
Substituting $\mathcal{M}^{C \setminus \{i, j\}}_k$ according to equation \ref{eq:help2} yields:
\begin{align}
    \begin{split}
            \Rightarrow & \mathcal{M}^{C \setminus \{j\}}_k - \mathcal{M}^{C \setminus \{k\}}_i + \mathcal{M}^{C \setminus \{j, k\}}_i\\
            & \geq \mathcal{M}^{C \setminus \{j\}}_i - \mathcal{M}^{C \setminus \{i\}}_k + \left( \mathcal{M}^{C \setminus \{j\}}_k + \mathcal{M}^{C \setminus \{j, k\}}_ij- \mathcal{M}^{C \setminus \{j\}}_i \right)\\
            \Rightarrow & \mathcal{M}^{C \setminus \{j\}}_k - \mathcal{M}^{C \setminus \{k\}}_i + \mathcal{M}^{C \setminus \{j, k\}}_i\\
            & \geq - \mathcal{M}^{C \setminus \{i\}}_k + \mathcal{M}^{C \setminus \{j\}}_k + \mathcal{M}^{C \setminus \{j, k\}}_i\\
            \Rightarrow & - \mathcal{M}^{C \setminus \{k\}}_i \geq - \mathcal{M}^{C \setminus \{i\}}_k\\
            \Rightarrow & \mathcal{M}^{C \setminus \{k\}}_i \leq \mathcal{M}^{C \setminus \{i\}}_k\\
        \end{split}
    \end{align}
Thus, for all $i \in C$ with $i\neq k$, $\mathcal{M}^{C \setminus \{k\}}_i \leq \mathcal{M}^{C \setminus \{i\}}_k$ and therefore $\mathcal{M}^C_i = \mathcal{M}^C_k - \mathcal{M}^{C \setminus \{i\}}_k + \mathcal{M}^{C \setminus \{k\}}_i \leq \mathcal{M}^C_k = v_C$. Thus, the algorithm fulfills R2 for $C$.


\subsection{Proof of R1}
\begin{tabularx}{\linewidth}{l X}
    R1 & \textbf{Non-negativity:} $\forall C \subseteq N$, $\forall i \in C: \mathcal{M}^C_i \geq 0$.
\end{tabularx}\\

Let $v$ be an arbitrary but fixed function that is monotonic with $v(\emptyset) = 0$. We prove the algorithm \ref{solfun} fulfills R1 by induction over the size of $C$.\\

\textbf{Base case:} Let $C \subseteq N$ with $\lvert C \rvert \leq 1$. It is easy to see that R1 holds here. Each player $i \in C$ is assigned their standalone value $v_i$ in line 4. Since $v$ is monotonic with $v(\emptyset) = 0$, we  $v_i \geq 0$ for all $i \in N$, thus $\mathcal{M}^C_i \geq 0$\\

\textbf{Inductive hypothesis:} Suppose, algorithm \ref{solfun} fulfills R1 for all coalitions $C \subseteq N$ with $\lvert C \rvert < s$ up to some size $s$, $s \geq 1$.\\

\textbf{Inductive step:} Let $C \subseteq N$ with $\lvert C \rvert=s$. We will show that R1 will hold here as well. For any $i \in C$, if $i$ is the index $k$ picked in line 17, then $\mathcal{M}^C_i = v_C$ and because $v$ is monotonic with $v(\emptyset) = 0$, $v_C \geq 0$ for all $C \subseteq N$.

Otherwise, $M_i^C$ is assigned in line 20: $\mathcal{M}^C_i = \mathcal{M}^C_k - \mathcal{M}^{C \setminus \{i\}}_k +\mathcal{M}^{C \setminus \{k\}}_i \geq v_C - v_{C \setminus \{i\}}$. This is, because $\mathcal{M}^C_k = v_C$ (line 17), $\mathcal{M}^{C \setminus \{i\}}_k \leq v_{C \setminus \{i\}}$ (due to R2) and $\mathcal{M}^{C \setminus \{k\}}_i \geq 0$ (due to the inductive hypothesis the algorithm fulfills R1 for $C \setminus \{k\}$). Furthermore, due to the monotonicity property of $v$, $v_C \geq v_{C \setminus \{i\}}$, thus $\mathcal{M}^C_i  \geq v_C - v_{C \setminus \{i\}} \geq 0$. Therefore, the algorithm fulfills R1 for $C$. By the principle of mathematical induction, algorithm \ref{solfun} fulfills R1 for all $C \subseteq N$ with $\lvert C \rvert = s$ and $s \leq |N|$.


\subsection{Proof of R3}
\begin{tabularx}{\linewidth}{l X}
        R3 & \textbf{Weak Efficiency:} $\forall C \subseteq N\; \exists i \in C: \mathcal{M}^C_i = v_C$.
\end{tabularx}\\

Let $C \subseteq N$. For $C = \emptyset$ this axiom does not apply. For $\lvert C \rvert = 1$ with $C = \{i\}$, we have due to line 4 $\mathcal{M}^C_i = v_i = v_C$. If $\lvert C \rvert \geq 2$, this axioms holds due to line 18, which assigns $\mathcal{M}^C_k = v_C$ to some $k \in C$. Therefore, the algorithm fulfills R3.


\subsection{Proof of R4}
\begin{tabularx}{\linewidth}{l X}
        R4 & \textbf{Individual Rationality:} $\forall C \subseteq N$, $\forall i \in N: \mathcal{M}^C_i \geq v_i.$
\end{tabularx}\\

Let $v$ be an arbitrary but fixed function that is monotonic with $v(\emptyset) = 0$. We prove the algorithm \ref{solfun} fulfills R4 by induction over the size of $C$.\\

\textbf{Base case:} Let $C \subseteq N$ with $\lvert C \rvert \leq 1$. R4 holds, since for all $C \subseteq N$ with $\lvert C \rvert \leq 1$ and all $i \in N$, $\mathcal{M}_i^{\emptyset} = v_i$ and $\mathcal{M}_i^C = v_i$ are directly assigned in line 2 and 4.\\

\textbf{Inductive hypothesis:} Suppose, algorithm \ref{solfun} fulfills R4 for all coalitions $C \subseteq N$ with $\lvert C \rvert < s$ up to some size $s$, $s \geq 1$.\\

\textbf{Inductive step:} Let $C \subseteq N$ with $\lvert C \rvert=s$ and $i \in C$. If $i = k$, then due to line 18, $\mathcal{M}^C_i = v_C$ and due to the monotonicity property of $v$, $v_C \geq v_i$. 

If $i \neq k$, $i$ is assigned $\mathcal{M}^C_i = \mathcal{M}^C_k - \mathcal{M}^{C \setminus \{i\}}_k + \mathcal{M}^{C \setminus \{k\}}_i$ in line 20. The following holds: $\mathcal{M}^C_k = v_C$ (due to line 18), $\mathcal{M}^{C \setminus \{i\}}_k \leq v_{C \setminus \{i\}} \leq v_C$ (due to R2 and the monotonicity property of $v$) and $\mathcal{M}^{C \setminus \{k\}}_i \geq v_i$ (due to the inductive hypothesis, the algorithm fulfills R4 for $C \setminus \{k\}$). Thus $\mathcal{M}^C_i = \mathcal{M}^C_k - \mathcal{M}^{C \setminus \{i\}}_k + \mathcal{M}^{C \setminus \{k\}}_i \geq v_C - v_C + v_i = v_i$.  Therefore, the algorithm fulfills R1 for $C$. By the principle of mathematical induction, algorithm \ref{solfun} fulfills R4 for all $C \subseteq N$ with $\lvert C \rvert = s$ and $s \leq |N|$.


\subsection{Proof of R5}
\begin{tabularx}{\linewidth}{l X}
        R5 & \textbf{Non-participation:} $\forall C \subseteq N$, $\forall i \notin C: \mathcal{M}^C_i = v_i$.
\end{tabularx}\\

Let $C \subseteq N$. If $C = \emptyset$, this axiom does not apply. If $\lvert C \rvert = 1$, due to line 4, for all $i \in N$, we have $\mathcal{M}^C_i = v_i$. If $\lvert C \rvert > 1$, due to line 10, for all $i \notin C$, we have $\mathcal{M}^C_i = v_i$. Thus, algorithm \ref{solfun} fulfills R5.


\subsection{Proof of F1}
\begin{tabularx}{\linewidth}{l X}
    F1 & \textbf{Uselessness:} For all $u \in N$,
    {\begin{align}
        \begin{split}
            & \left( \forall C \subseteq N \setminus \{u\}: v_C = v_{C \cup \{u\}} \right)\\
            \Rightarrow & \big( \left(\forall C \subseteq N: \mathcal{M}^C_u = 0 \right)\\
            & \land \left( \forall C \subseteq N \setminus \{u\}, \forall i \in C: \mathcal{M}^C_i = \mathcal{M}^{C \cup \{u\}}_i \right) \big).
        \end{split}
    \end{align}}
\end{tabularx}\\

Let $v$ be an arbitrary but fixed function that is monotonic with $v(\emptyset) = 0$. We prove the algorithm \ref{solfun} fulfills F1 by induction over the size of $C$.\\

\textbf{Base case:} Let $C \subseteq N$ with $\lvert C \rvert \leq 1$. Let $u$ be "useless, so $\forall C \subseteq N \setminus \{u\}: v_C = v_{C \cup \{u\}}$. Then $v_u = 0$ must hold as well, thus for all $C \subseteq N$ with $\lvert C \rvert \leq 1$,  $\mathcal{M}_u^C = v_u = 0$ due to line 2 and line 4. 

For the second part, if $C = \emptyset$, it does not apply, since there is no player in $C$. If $C = \{i\}$ for any $i \in N$ with $i \neq u$, we have $\mathcal{M}^{C \cup \{u\}}_i = \mathcal{M}^{C \cup \{u\}}_u - \mathcal{M}^{C}_u + \mathcal{M}^{\{u\}}_i = \mathcal{M}^C_u - 0 + v_i$ has to hold ($\mathcal{M}^{C}_u = v_u = 0$ and $\mathcal{M}^{\{u\}}_i = v_i$ both due to line 4). Due to R2, we have $\mathcal{M}^C_i \leq v_{{C \cup \{u\}}} = v_C = v_i$, thus $v_i = \mathcal{M}^C_i = \mathcal{M}^{C \cup \{u\}}_i = v_i$.\\

\textbf{Inductive hypothesis:} Suppose, algorithm \ref{solfun} fulfills F1 for all coalitions $C \subseteq N$ with $\lvert C \rvert < s$ up to some size $s$, $s \geq 1$.\\

\textbf{Inductive step:} Let $C \subseteq N$ with $\lvert C \rvert=s$. Let $u$ be "useless, so $\forall C \subseteq N \setminus \{u\}: v_C = v_{C \cup \{u\}}$
. 
First we will examine the special case, that all players in $C$ are "useless". Then $v_C = v_{\emptyset} = 0$ and since we have already seen that the algorithms fulfills R1 and R2 for $C$, in this case we have for all $i \in C$: $\mathcal{M}_i^C = 0$.
    
Suppose now, there exists $u \in C$ such that $\left( \forall A \subseteq N \setminus \{u\}: v_A = v_{A \cup \{u\}} \right)$, but $v_C > 0$ (and therefore also $v_{C \setminus \{u\} > 0}$). We will first see, that  $u$ cannot be the player $k$ chosen in line 17. Due to F5, for all $i \in C \setminus \{k\}$: $\mathcal{M}^C_i -  \mathcal{M}^{C \setminus \{k\}}_i = \mathcal{M}^C_k - \mathcal{M}^{C \setminus \{i\}}_k$, with $\mathcal{M}^C_k = v_C$ and due to R2 $\mathcal{M}^C_i \leq v_C$, therefore $\mathcal{M}^{C \setminus \{k\}}_i \leq \mathcal{M}^{C \setminus \{i\}}_k$. If $k=u$, this would give $\mathcal{M}^{C \setminus \{u\}}_i \leq 0$, since $\mathcal{M}^{C \setminus \{i\}}_u = 0$ (because, due to the induction hypothesis, the algorithm fulfills F1 for $C \setminus \{i\}$). But this would violate R3, since now for all $i \in C$: $\mathcal{M}^{C \setminus \{u\}}_i \leq 0 < v_{C \setminus \{u\}}$. Thus, $k \neq u$.

Next, we will see that $\mathcal{M}^{C \setminus \{u\}}_k = \mathcal{M}^C_k$ with $k$ being the player chosen in line 17. Suppose this were not the case, so $\mathcal{M}^{C \setminus \{u\}}_k < \mathcal{M}^C_k$ ($\mathcal{M}^{C \setminus \{u\}}_k > \mathcal{M}^C_k$ is not possible, since due to R2 $\mathcal{M}^{C \setminus \{u\}}_k \leq v_{C \setminus \{u\}} = v_C = \mathcal{M}^C_k$). Then, due to R3, there must be some $k' \in C \setminus \{u, k\}$ with $\mathcal{M}_{k'}^{C \setminus \{u\}} =  v_{C \setminus \{u\}} = v_C > \mathcal{M}_k^{C \setminus \{u\}}$. Due to F6, we have 
\begin{align}
    \mathcal{M}_k^{C \setminus \{u\}} - \mathcal{M}_k^{C \setminus \{k', u\}} = \mathcal{M}_{k'}^{C \setminus \{u\}} - \mathcal{M}_{k'}^{C \setminus \{k, u\}}
\end{align}
and therefore
\begin{align}
    \mathcal{M}_k^{C \setminus \{u\}} = \mathcal{M}_{k'}^{C \setminus \{u\}} - \mathcal{M}_{k'}^{C \setminus \{k, u\}} + \mathcal{M}_k^{C \setminus \{k', u\}}    
\end{align}
If $\mathcal{M}_{k'}^{C \setminus \{u\}} > \mathcal{M}_k^{C \setminus \{u\}}$, then $\mathcal{M}_{k'}^{C \setminus \{k, u\}} > \mathcal{M}_k^{C \setminus \{k', u\}}$ would have to hold. Furthermore, since due to the induction hypothesis the algorithm fulfills F1 for $C \setminus \{k\}$ and $C \setminus \{k'\}$, $\mathcal{M}_{k'}^{C \setminus \{k, u\}} = \mathcal{M}_{k'}^{C \setminus \{k\}}$ and $\mathcal{M}_k^{C \setminus \{k', u\}} = \mathcal{M}_k^{C \setminus \{k'\}}$ and thereby $\mathcal{M}_{k'}^{C \setminus \{k\}} > \mathcal{M}_{k}^{C \setminus \{k'\}}$ would have to hold. However, since the algorithm fulfills F5, we have 
\begin{align}
    \mathcal{M}_k^{C} - \mathcal{M}_k^{C \setminus \{k'\}} = \mathcal{M}_{k'}^{C} - \mathcal{M}_{k'}^{C \setminus \{k\}}    
\end{align}
But with $\mathcal{M}_{k'}^{C \setminus \{k\}} > \mathcal{M}_{k}^{C \setminus \{k'\}}$ this would imply $\mathcal{M}_{k'}^{C} > \mathcal{M}_{k}^{C}$. But this would violate R3, since $\mathcal{M}_{k}^{C} = v_C$. Therefore, $\mathcal{M}^{C \setminus \{u\}}_k < \mathcal{M}^C_k$ is a contradiction and $\mathcal{M}^{C \setminus \{u\}}_k = \mathcal{M}^C_k$ must hold.

Using this, due to F5 we have
\begin{align}
    \mathcal{M}^C_u = \mathcal{M}^C_k - \mathcal{M}^{C \setminus \{u\}}_k + \mathcal{M}^{C \setminus \{k\}}_u = v_C - v_C + 0 = 0    
\end{align}
($\mathcal{M}^{C \setminus \{k\}}_u = 0$ because due to the induction hypothesis, the algorithm fulfills F1 for $C \setminus \{k\}$).

Furthermore, again due to F5 and using $\mathcal{M}^C_u = \mathcal{M}^{C \setminus \{i\}}_u = 0$, for all $i \in C \setminus \{u\}$ we have
\begin{align}
    \begin{split}
        \mathcal{M}^C_i = & \mathcal{M}^C_u - \mathcal{M}^{C \setminus \{i\}}_u + \mathcal{M}^{C \setminus \{u\}}_i\\
        = & 0 - 0 + \mathcal{M}^{C \setminus \{u\}}_i\\
        = & \mathcal{M}^{C \setminus \{u\}}_i,
    \end{split}
\end{align}
Thus, the algorithm fulfills axiom F1 for $C$ as well. By the principle of mathematical induction, algorithm \ref{solfun} fulfills F1 for all $C \subseteq N$ with $\lvert C \rvert = s$ and $s \leq |N|$.\\


\subsection{Proof of F2}
\begin{tabularx}{\linewidth}{l X}
    F2 & \textbf{Symmetry:} For all $i,j \in N$ s.t. $i\neq j$,
    {\begin{align}
        \begin{split}
            & \left( \forall C \subseteq N \setminus \{i, j\}: v_{C \cup \{i\}} = v_{C \cup \{j\}} \right)\\
            \Rightarrow & \left(\forall C \subseteq N \text{ with } i, j \in C: \mathcal{M}^C_i = \mathcal{M}^C_j\right).
        \end{split}
    \end{align}}
\end{tabularx}\\

Let $v$ be an arbitrary but fixed function that is monotonic with $v(\emptyset) = 0$. We prove the algorithm \ref{solfun} fulfills F2 by induction over the size of $C$.\\

\textbf{Base case:} Let $C \subseteq N$ with $\lvert C \rvert \leq 2$. If $ \exists i,j \in N$ with $i\neq j$, so that $\forall A \subseteq N \setminus \{i, j\}: v_{A \cup \{i\}} = v_{A \cup \{j\}}$, this also means $v_i = v_j$ ($A = \emptyset$). If $\lvert C \rvert < 2$ or $C \neq \{i, j\}$, this axiom does not apply here.

If $C = \{i, j\}$, we have due to F5:
\begin{align}
    \mathcal{M}^C_i = \mathcal{M}^C_j - \mathcal{M}^{C \setminus \{i\}}_j + \mathcal{M}^{C \setminus \{j\}}_i
\end{align}
Due to R3, have $\mathcal{M}^{C \setminus \{i\}}_j = \mathcal{M}^{\{j\}}_j = v_j$ and $\mathcal{M}^{C \setminus \{j\}}_i = \mathcal{M}^{\{i\}}_i = v_i$ and therfore 
\begin{align}
    \mathcal{M}^C_i = \mathcal{M}^C_j - v_j + v_i = \mathcal{M}^C_j.
\end{align}
Thus the axiom holds for all $C \subseteq N$ with $\lvert C \rvert \leq 2$.\\

\textbf{Inductive hypothesis:} Suppose, algorithm \ref{solfun} fulfills F2 for all coalitions $C \subseteq N$ with $\lvert C \rvert < s$ up to some size $s$, $s \geq 0$.\\

\textbf{Inductive step:} Let $C \subseteq N$ with $\lvert C \rvert=s$. Suppose $\exists i,j \in C$ with $i\neq j$, so that $\left( \forall A \subseteq N \setminus \{i, j\}: v_{A \cup \{i\}} = v_{A \cup \{j\}} \right)$. We will see that $\mathcal{M}^C_i = \mathcal{M}^C_j$.
    
Due to F6, we know that $\mathcal{M}^C_i - \mathcal{M}^{C \setminus \{j\}}_i = \mathcal{M}^C_j - \mathcal{M}^{C \setminus \{i\}}_j$. Suppose now $\mathcal{M}^C_i \neq \mathcal{M}^C_j$. We will see that this leads to a contradiction.

If $\mathcal{M}^C_i \neq \mathcal{M}^C_j$, then $\mathcal{M}^{C \setminus \{j\}}_i \neq \mathcal{M}^{C \setminus \{i\}}_j$ must hold as well. W.l.o.g. we will assume $\mathcal{M}^{C \setminus \{j\}}_i < \mathcal{M}^{C \setminus \{i\}}_j$. Then also $\mathcal{M}^{C \setminus \{j\}}_i < v_{C \setminus\{j\}}$ , because due to R2 $\mathcal{M}^{C \setminus \{i\}}_j \leq v_{C \setminus\{i\}} = v_{C \setminus\{j\}}$. Let $a$ be the player for which $\mathcal{M}^{C \setminus \{j\}}_a = v_{C \setminus\{j\}}$, due to R3 such a player must exist. Due to F6 we have 
\begin{align}
    \begin{split}
        & \mathcal{M}^{C \setminus \{j\}}_i = \mathcal{M}^{C \setminus \{j\}}_a - \mathcal{M}^{C \setminus \{i, j\}}_a + \mathcal{M}^{C \setminus \{a, j\}}_i\\
        < & \mathcal{M}^{C \setminus \{i\}}_j = \mathcal{M}^{C \setminus \{i\}}_a - \mathcal{M}^{C \setminus \{i, j\}}_a + \mathcal{M}^{C \setminus \{a, i\}}_j.
    \end{split}
\end{align}
Adding $\mathcal{M}^{C \setminus \{i, j\}}_a$ on both sides yields
\begin{align}
    \mathcal{M}^{C \setminus \{j\}}_a + \mathcal{M}^{C \setminus \{a, j\}}_i < \mathcal{M}^{C \setminus \{i\}}_a + \mathcal{M}^{C \setminus \{a, i\}}_j.
\end{align}
Due to $v_{C \setminus \{i\}} = v_{C \setminus \{j\}} = \mathcal{M}^{C \setminus \{j\}}_a$ and $\mathcal{M}^{C \setminus \{i\}}_a \leq v_{C \setminus \{i\}}$ (because of R2) we have $\mathcal{M}^{C \setminus \{j\}}_a \geq \mathcal{M}^{C \setminus \{i\}}_a$, which leads to
\begin{align}
    \mathcal{M}^{C \setminus \{a, j\}}_i < \mathcal{M}^{C \setminus \{a, i\}}_j.
\end{align}
However, due to F6 we have $\mathcal{M}^{C \setminus \{a\}}_i - \mathcal{M}^{C \setminus \{a, j\}}_i = \mathcal{M}^{C \setminus \{a\}}_j - \mathcal{M}^{C \setminus \{a, i\}}_j$ and, since due to the inductive hypothesis the algorithm fulfills F2 for $C \setminus \{a\}$, we have $\mathcal{M}^{C \setminus \{a\}}_i = \mathcal{M}^{C \setminus \{a\}}_j$, thus also $\mathcal{M}^{C \setminus \{a, j\}}_i = \mathcal{M}^{C \setminus \{a, i\}}_j$. Therefore, $\mathcal{M}^{C \setminus \{a, j\}}_i < \mathcal{M}^{C \setminus \{a, i\}}_j$ is a contradiction.
Thus $\mathcal{M}^{C \setminus \{j\}}_i = \mathcal{M}^{C \setminus \{i\}}_j$ has to hold and therefore also $\mathcal{M}^C_i = \mathcal{M}^C_j$. Then the algorithm fulfills F2 for $C$ and by the principle of mathematical induction, algorithm \ref{solfun} fulfills F2 for all $C \subseteq N$ with $\lvert C \rvert = s$ and $s \leq |N|$.\\


\subsection{Proof of F3}
\begin{tabularx}{\linewidth}{l X}
    F3 & \textbf{Strict Desirability:} For all $i,j \in N$ s.t. $i\neq j$, 
    {\begin{align}
        \begin{split}
            & \left( \forall A \subseteq N \setminus \{i, j\}: v_{A \cup \{i\}} \geq v_{A \cup \{j\}} \right) \\
            \Rightarrow & (\forall C \subseteq N \text{ with } i, j \in C \text{ and } \exists B \subseteq C \setminus \{i, j\}, B \neq \emptyset,\\
            & \text{ s.t. } v_{B \cup \{i\}} > v_{B \cup \{j\}} ): \left( \mathcal{M}^C_i > \mathcal{M}^C_j\right).
        \end{split}
    \end{align}}
\end{tabularx}\\

Let $v$ be an arbitrary but fixed function that is monotonic with $v(\emptyset) = 0$. We show that the algorithm \ref{solfun} fulfills F3 by induction over the size of $C$.\\

\textbf{Base case:} Let $C \subseteq N$ with $\lvert C \rvert \leq 2$. Since the axiom only applies to coalitions with at least three members, it does not apply here.\\

\textbf{Inductive hypothesis:} Suppose, algorithm \ref{solfun} fulfills F2 for all coalitions $C \subseteq N$ with $\lvert C \rvert < s$ up to some size $s$, $s \geq 2$.\\

\textbf{Inductive step:} Let $C \subseteq N$ with $\lvert C \rvert=s$. Suppose $\exists i,j \in C$ with $i\neq j$, so that $\forall A \subseteq N \setminus \{i, j\}: v_{A \cup \{i\}} \geq v_{A \cup \{j\}}$ and $\exists B \subseteq C \setminus \{i, j\}, B \neq \emptyset$, for which $v_{B \cup \{i\}} > v_{B \cup \{j\}}$. We will see that $\mathcal{M}^C_i > \mathcal{M}^C_j$.

We have to distinguish two cases, first $\mathcal{M}_i^{C \setminus \{j\}} < v_{C \setminus \{j\}}$ and second $\mathcal{M}_a^{C \setminus \{j\}} = v_{C \setminus \{j\}}$. Note, that due to R3, $\mathcal{M}_a^{C \setminus \{j\}} > v_{C \setminus \{j\}}$ is impossible.\\

\textbf{Case 1: } $\mathcal{M}_i^{C \setminus \{j\}} < v_{C \setminus \{j\}}$.

Let $a \in C \setminus \{j\}$ be the player such that $\mathcal{M}_a^{C \setminus \{j\}} = v_{C \setminus \{j\}}$. Such a player must exist due to R3 and $a \neq i$. Since $v_{C \setminus \{j\}} \geq v_{C \setminus \{i\}}$, $\mathcal{M}_a^{C \setminus \{j\}} = v_{C \setminus \{j\}} \geq v_{C \setminus \{i\}} \geq \mathcal{M}_a^{C \setminus \{i\}}$ must hold (the last inequality again due to R3).
    
First, suppose $ \exists B' \subseteq C \setminus \{i, j, a\}, B' \neq \emptyset$ with $v_{B' \cup \{i\}} > v_{B' \cup \{j\}}$. Then, because due to the inductive hypothesis the algorithm fulfills F3 for $C \setminus \{a\}$, $\mathcal{M}_i^{C \setminus \{a\}} > \mathcal{M}_j^{C \setminus \{a\}}$ must hold. Using $\mathcal{M}_a^{C \setminus \{j\}} \geq \mathcal{M}_a^{C \setminus \{i\}}$ and F5, we have:
\begin{align*}
    &\mathcal{M}_i^C=&\mathcal{M}_a^C - \mathcal{M}_a^{C \setminus \{i\}} + \mathcal{M}_i^{C \setminus \{a\}}& \\
    >&&\mathcal{M}_a^C - \mathcal{M}_a^{C \setminus \{j\}} + \mathcal{M}_j^{C \setminus \{a\}} & = \mathcal{M}_j^C.
\end{align*}
So the algorithm fulfills F3 for $C$.
    
Now, suppose no such B' exists, thus we have for all $A' \subseteq C \setminus \{i, j, a\}$: $v_{A \cup \{i\}} = v_{A \cup \{j\}}$. Then, due to F2, we have $\mathcal{M}^{A'}_i = \mathcal{M}^{A'}_j$ for all $A' \subseteq C \setminus \{i, j, a\}$, especially $\mathcal{M}_j^{C \setminus \{a\}} = \mathcal{M}_i^{C \setminus \{a\}}$. Furthermore, for $B = C \setminus \{i, j\}$ we have $v_{B \cup \{i\}} > v_{B \cup \{j\}}$. Thus, $v_{C \setminus \{j\}} > v_{C \setminus \{i\}}$, so for a as defined above, we have 
\begin{align}
    \mathcal{M}_a^{C \setminus \{j\}} = v_{C \setminus \{j\}} > v_{C \setminus \{i\}} \geq \mathcal{M}_a^{C \setminus \{i\}}.
\end{align}
Again, together with F6, we have:
\begin{align*}
    &\mathcal{M}_i^C=&\mathcal{M}_a^C - \mathcal{M}_a^{C \setminus \{i\}} + \mathcal{M}_i^{C \setminus \{a\}}& \\
    >&&\mathcal{M}_a^C - \mathcal{M}_a^{C \setminus \{j\}} + \mathcal{M}_j^{C \setminus \{a\}} & = \mathcal{M}_j^C.
\end{align*}\\

\textbf{Case 2: } $\mathcal{M}_i^{C \setminus \{j\}} = v_{C \setminus \{j\}}$.

First, suppose $\mathcal{M}_i^{C \setminus \{j\}} > \mathcal{M}_j^{C \setminus \{i\}}$. This directly gives 
\begin{align}
    \mathcal{M}^C_i > \mathcal{M}^C_i - \mathcal{M}_i^{C \setminus \{j\}} + \mathcal{M}_j^{C \setminus \{i\}}
\end{align}
and thus F3 holds for $C$.

Now, suppose $\mathcal{M}_i^{C \setminus \{j\}} = \mathcal{M}_j^{C \setminus \{i\}}$. Because of $\mathcal{M}_i^{C \setminus \{j\}} = v_{C \setminus \{j\}}$ and $v_{C \setminus \{i\}} \geq \mathcal{M}_j^{C \setminus \{i\}}$ (due to R2), this would imply $v_{C \setminus \{j\}} = v_{C \setminus \{i\}}$

Then $B \neq C \setminus \{i, j\}$ and must be some $a \in C \setminus \{i, j\}$, such that for $B \subseteq C \setminus \{i, j, a\}$, $B \neq \emptyset$ and $v_{B \cup \{i\}} > v_{B \cup \{j\}}$. Then, because due to the inductive hypothesis the algorithm fulfills F3 for $C \setminus \{a\}$, $\mathcal{M}_i^{C \setminus \{a\}} > \mathcal{M}_j^{C \setminus \{a\}}$ must hold. Since due to F5
\begin{align}
     \mathcal{M}_i^{C \setminus \{a\}} = \mathcal{M}_j^{C \setminus \{a\}} - \mathcal{M}_j^{C \setminus \{a, i\}} + \mathcal{M}_i^{C \setminus \{a, j\}},
\end{align}
$\mathcal{M}_j^{C \setminus \{a, i\}} < \mathcal{M}_i^{C \setminus \{a, j\}}$ must hold as well. R5 yields the following two equations: 
\begin{equation}
    \mathcal{M}^{C \setminus \{j\}}_a = \mathcal{M}^{C \setminus \{j\}}_i - \mathcal{M}^{C \setminus \{a, j\}}_i + \mathcal{M}^{C \setminus \{i, j\}}_a
\end{equation}
and
\begin{equation}
    \mathcal{M}^{C \setminus \{i\}}_a = \mathcal{M}^{C \setminus \{i\}}_j - \mathcal{M}^{C \setminus \{a, i\}}_j + \mathcal{M}^{C \setminus \{i, j\}}_a
\end{equation}
Using $\mathcal{M}_i^{C \setminus \{j\}} = \mathcal{M}_j^{C \setminus \{i\}}$ and $\mathcal{M}_j^{C \setminus \{a, i\}} < \mathcal{M}_i^{C \setminus \{a, j\}}$ this yields
\begin{equation}
    \mathcal{M}^{C \setminus \{j\}}_a > \mathcal{M}^{C \setminus \{i\}}_a.
\end{equation}
Due to the inductive hypothesis, $\mathcal{M}_i^{C \setminus \{a\}} > \mathcal{M}_j^{C \setminus \{a\}}$. Again, together with F6, we have:
\begin{align*}
    &\mathcal{M}_i^C=&\mathcal{M}_a^C - \mathcal{M}_a^{C \setminus \{i\}} + \mathcal{M}_i^{C \setminus \{a\}}& \\
    >&&\mathcal{M}_a^C - \mathcal{M}_a^{C \setminus \{j\}} + \mathcal{M}_j^{C \setminus \{a\}} & = \mathcal{M}_j^C.
\end{align*}\\

Finally, note that $\mathcal{M}_i^{C \setminus \{j\}} < \mathcal{M}_j^{C \setminus \{i\}}$ is impossible, since $\mathcal{M}_i^{C \setminus \{j\}} = v_{C \setminus \{j\}} \geq v_{C \setminus \{i\}} \geq \mathcal{M}_j^{C \setminus \{i\}}$ due to R2.\\

Therefore, the algorithm fulfills F3 for $C$. By the principle of mathematical induction, algorithm \ref{solfun} fulfills F2 for all $C \subseteq N$ with $\lvert C \rvert = s$ and $s \leq |N|$.\\


\subsection{Proof of F4}
\begin{tabularx}{\linewidth}{l X}
    F4 & \textbf{Strict Monotonicity:} Let $v$ and $v'$ denote any two value functions over all coalitions $C \subseteq N$ and $\mathcal{M}(v)^C_i$ and $\mathcal{M}(v')^C_i$ be the corresponding values of rewards received by player $i$ in coalition $C$. For all $C \in N$ and all $i \in C$,
    {\begin{align}
        \begin{split}
            \left(v'_C > v_C \right) & \land\\
            \left( \forall A \subseteq C \setminus \{i\}: v'_{A \cup \{i\}} \geq v_{A \cup \{i\}} \right) & \land\\
            \left( \forall D \subseteq C \setminus \{i\}: v'_D = v_D \right)\\
            \Rightarrow  \mathcal{M}(v')^C_i > \mathcal{M}(v)^C_i.
        \end{split}
    \end{align}}
\end{tabularx}\\

Let $v$ and $v'$ denote any two arbitrary but fixed value functions over all coalitions $C \subseteq N$ and $\mathcal{M}(v)^C_i$ that are monotonic with $v_{\emptyset} = v'_{\emptyset}$. We show that the algorithm \ref{solfun} fulfills F3 by induction over the size of $C$.\\

\textbf{Base case:} Let $C \subseteq N$ with $\lvert C \rvert \leq 1$. It is easy to see that F4 holds here. In the case of one-member coalitions F4 simplifies to: $\forall C \subseteq N$ with $\lvert C \rvert = 1$ and $i \in C$,
{\begin{align}
        \begin{split}
            \left( v'_i > v_i \right) & \land\\
            \left( v'_{\emptyset} = v_{\emptyset} \right) \\
            \Rightarrow  \mathcal{M}(v')^C_i > \mathcal{M}(v)^C_i.
        \end{split}
    \end{align}}
since here $C \setminus \{i\} = \emptyset$. The axiom holds here, since under our algorithm due to line 4, $\mathcal{M}(v')^C_i = v'_i > v_i = \mathcal{M}(v)^C_i$. For $C = \emptyset$, this axiom does not apply.\\

\textbf{Inductive hypothesis:} Suppose, algorithm \ref{solfun} fulfills F4 for all coalitions $C \subseteq N$ with $\lvert C \rvert < s$ up to some size $s$, $s \geq 1$.\\

\textbf{Inductive step:} Let $C \subseteq N$ with $\lvert C \rvert=s$. Suppose for some $i \in C$, $v'_C > v_C$ and $\forall A \subseteq C \setminus \{i\}: v'_{A \cup \{i\}} \geq v_{A \cup \{i\}}$ and $\forall D \subseteq C \setminus \{i\}: v'_D = v_D$. We will show that $\mathcal{M}(v')^C_i > \mathcal{M}(v)^C_i$ holds.

As we have already seen, our algorithm fulfills R3, R5 and F5. Then due to theorem \ref{unique}, we have $\mathcal{M}(v')^{D}_i = \mathcal{M}(v)^{D}_i$ for all $D \subset C \setminus \{i\}$. $\mathcal{M}(v')^{C}_i$ is assigned either in line 18 or 20. In line 18 we would have 
\begin{align}
    \mathcal{M}(v')^{C}_i = v'_C > v_C \geq \mathcal{M}(v)^{C}_i
\end{align}
(the last inequality holds due to R3). In line 18, $k$ would be the player for which $\mathcal{M}(v')^C_k = v'_C$ and then
\begin{align}
    \begin{split}
    \mathcal{M}(v')^{C}_i & = \mathcal{M}(v')^C_k - \mathcal{M}(v')^{C \setminus \{i\}}_k + \mathcal{M}(v')^{C \setminus \{k\}}_i\\
    & = v_C - \mathcal{M}(v')^{C \setminus \{i\}}_k + \mathcal{M}(v')^{C \setminus \{k\}}_i\\
    & > v_C - \mathcal{M}(v)^{C \setminus \{i\}}_k + \mathcal{M}(v)^{C \setminus \{k\}}_i\\
    &= \mathcal{M}(v)^{C}_i
    \end{split}
\end{align}
because, as mentioned above, $\mathcal{M}(v')^{C \setminus \{i\}}_k = \mathcal{M}(v)^{C \setminus \{i\}}_k$. Furthermore, due to theorem \ref{unique} and the inductive hypothesis, $\mathcal{M}(v')^{C \setminus \{k\}}_i \geq \mathcal{M}(v)^{C \setminus \{k\}}_i$ has to hold, because either for all $A \subseteq C \setminus \{k\}$: $v'_A = v_A$ and thus $\mathcal{M}(v')^A_i > \mathcal{M}(v)^A_i$. Or, for some $A \subseteq C \setminus \{k\}$: $v'_A > v_A$. Then, due to the inductive hypothesis, $\mathcal{M}(v')^A_i > \mathcal{M}(v)^A_i$ and for all $B$ with $A \subseteq B \subseteq C$: $\mathcal{M}(v')^B_i \geq \mathcal{M}(v)^B_i$.

In conclusion, algorithm \ref{solfun} fulfills F4 for all $C \subseteq N$ $\lvert C \rvert=s$. By the principle of mathematical induction, F4 holds for all $s \leq |N|$.

\end{document}